\newtheorem{thm}{Theorem}%[section]
\newtheorem{proof}{Proof}
\definecolor{purple}{rgb}{0.6,0.0,0.7}
\definecolor{cyan}{rgb}{0.2,0.6,0.7}
\newcommand {\myvec}[1] {{\mbox{\boldmath $#1$}}}
\newcommand {\mymat}[1]  {{\mbox{\boldmath $#1$}}}
\newcommand {\mS} {\mymat{S}}
\newcommand {\mX} {\mymat{X}}
\newcommand {\A} {\mymat{A}}
\newcommand {\Omeg} {\mymat{\Omega}}
\newcommand {\U} {\mymat{U}}
\newcommand {\hA} {\widehat{\A}}
\newcommand {\bA} {\mybar{\A}}
\newcommand {\bB} {\mybar{\B}}
\newcommand {\bP} {\mybar{\P}}
\newcommand {\bS} {\mybar{\mS}}
\newcommand {\bX} {\mybar{\X}}
\newcommand {\bC} {\mybar{\C}}
\newcommand {\B} {\mymat{B}}
\newcommand {\mBeta} {\mymat{\Psi}}
\newcommand {\brvB} {\overline{\overline{{\B}}}}
\newcommand {\hB} {\widehat{\B}}
\newcommand {\tB} {\widetilde{\B}}
\newcommand {\C} {\mymat{C}}
\newcommand {\Chi} {\mymat{\mathcal{X}}}
\newcommand {\D} {\mymat{D}}
\newcommand {\tD} {\widetilde{\D}}
\newcommand {\E} {\mymat{E}}
\newcommand {\tE} {\widetilde{\E}}
\newcommand {\G} {\mymat{G}}
\renewcommand {\H} {\mymat{H}}
\newcommand {\tH} {\widetilde{\H}}
\renewcommand {\P} {\mymat{P}}
\newcommand {\PI} {\mymat{\Pi}}
\newcommand {\Q} {\mymat{Q}}
\newcommand {\brvQ} {\mybar{\mybar{{\Q}}}}
\newcommand {\Ga} {\mymat{\Gamma}}
\newcommand {\I} {\mymat{I}}
\newcommand {\Ka} {\mymat{\mathcal{K}}}
\newcommand {\X} {\mymat{X}}
\newcommand {\ue} {\myvec{e}}
\newcommand {\ua} {\myvec{a}}
\newcommand {\ub} {\myvec{b}}
\newcommand {\tb} {\widetilde{\ub}}
\newcommand {\tbeta} {\widetilde{\myvec{\beta}}}
\newcommand {\ubeta} {\myvec{\psi}}
\newcommand {\uc} {\myvec{c}}
\newcommand {\us} {\myvec{s}}
\newcommand {\bs} {\bar{\us}}
\newcommand {\uphi} {\myvec{\phi}}
\newcommand {\Rset} {\mathbb{R}}
\newcommand {\Cset} {\mathbb{C}}
\newcommand {\Tr} {\text{Tr}}
\newcommand {\tps} {\text{T}}
\newsavebox\myboxA
\newsavebox\myboxB
\newlength\mylenA
\newcommand*\mybar[2][0.75]{%
    \sbox{\myboxA}{$\m@th#2$}%
    \setbox\myboxB\null% Phantom box
    \ht\myboxB=\ht\myboxA%
    \dp\myboxB=\dp\myboxA%
    \wd\myboxB=#1\wd\myboxA% Scale phantom
    \sbox\myboxB{$\m@th\overline{\copy\myboxB}$}%  Overlined phantom
    \setlength\mylenA{\the\wd\myboxA}%   calc width diff
    \addtolength\mylenA{-\the\wd\myboxB}%
    \ifdim\wd\myboxB<\wd\myboxA%
       \rlap{\hskip 0.5\mylenA\usebox\myboxB}{\usebox\myboxA}%
    \else
        \hskip -0.5\mylenA\rlap{\usebox\myboxA}{\hskip 0.5\mylenA\usebox\myboxB}%
    \fi}
\def\comment#1{}
\DeclareSymbolFont{grb}{OML}{cmm}{b}{it}
\DeclareMathSymbol{\lambdab}{\mathord}{grb}{"15}
\newcommand{\stkout}[1]{
\color{red}\ifmmode\text{\sout{\ensuremath{#1}}}\else\sout{#1}\fi\color{black}}
\newcommand{\addra}{}
\newcommand{\delra}{\comment}
\newcommand{\mathhlra }{}
\newcommand{\addrb}{}
\begin{document}

%% INSERT YOUR PAPER TITLE HERE
\title{The Extended ``Sequentially Drilled" Joint Congruence Transformation and its Application in Gaussian Independent Vector Analysis}

%% AUTHORS LIST and THANKS
\author{Amir Weiss$^1$, Arie Yeredor$^1$, Sher Ali Cheema$^2$, and Martin Haardt$^2$

\thanks{$^1$ Authors with Tel-Aviv University, School of Electrical Engineering,
P.~O.~Box 39040, Tel-Aviv 69978, Israel, e-mail:
amirwei2@mail.tau.ac.il, arie@eng.tau.ac.il }
\thanks{
$^2$ Authors with
Ilmenau University of Technology,
Communications Research Laboratory,
P.~O.~Box 10~05~65, D-98684 Ilmenau, Germany,
e-mail: \{sher-ali.cheema, martin.haardt\}@tu-ilmenau.de,
%e-mail: martin.haardt@tu-ilmenau.de,
phone: +49 (3677) 69-2613,
fax: +49 (3677) 69-1195,
WWW: http://www.tu-ilmenau.de/crl.}
}

\maketitle

%% PLACE YOUR ABSTRACT HERE
\begin{abstract}
Independent Vector Analysis (IVA) has emerged in recent years as an extension of Independent Component Analysis (ICA) into multiple sets of mixtures, where the source signals in
each set are independent, but may depend on source signals in the other sets. In a semi-blind IVA (or ICA) framework, information regarding the probability distributions of the
sources may be available, giving rise to Maximum Likelihood (ML) separation. In recent work we have shown that under the multivariate Gaussian model, with arbitrary temporal
covariance matrices (stationary or non-stationary) of the source signals, ML separation requires the solution of a ``Sequentially Drilled" Joint Congruence (SeDJoCo)
transformation of a set of matrices, which is reminiscent of (but different from) classical joint diagonalization. In this paper we extend our results to the IVA problem,
showing how the ML solution for the Gaussian model (with arbitrary covariance and cross-covariance matrices) takes the form of an extended SeDJoCo problem. We formulate the
extended problem, derive a condition for the existence of a solution, and propose two iterative solution algorithms. In addition, we derive the induced Cram\'{e}r-Rao Lower
Bound (iCRLB) on the resulting Interference-to-Source Ratios (ISR) matrices, and demonstrate by simulation how the ML separation obtained by solving the extended SeDJoCo problem
indeed attains the iCRLB (asymptotically), as opposed to other separation approaches, which cannot exploit prior knowledge regarding the sources’ distributions.
\end{abstract}

\begin{IEEEkeywords}
%\boldmath
Joint blind source separation, independent vector analysis, maximum likelihood, SeDJoCo. %coordinated beamforming, multi-user MIMO.
\end{IEEEkeywords}

%%%%%%%%%%%%%%%%%%%%%%%%%%%%%%%%%%%%%%%%%%%%%%%%%%%%%%%%%%%%%%%%%%
\section{Introduction}
\label{sec_Intro} Joint matrix transformations and tensor decompositions are important fundamental algebraic tools in a broad range of signal processing fields, such as Blind
Source Separation (BSS, \cite{yeredor2002non,yeredor2005using,cardoso1990eigen,cichocki2009nonnegative}), Independent Vector Analysis (IVA,
\cite{li2011joint,anderson2014independent,via2011maximum}), data mining (\cite{sun2006window,acar2006collective,kolda2008scalable,morup2011applications}) and Multi-User
Multiple-Input Multiple-Output (MU-MIMO) systems in wireless communications (\cite{stankovic2008generalized,song2010using,de2006tensor,favier2012tensor}). In many data analysis
problems, special internal structures can be revealed by applying some transformations or decompositions to sets of matrices (or to tensors) constructed from the available data.

One common example is the use of Approximate Joint Diagonalization (AJD), which is instrumental in the context of BSS and independent component analysis (ICA). In many separation
algorithms (e.g., JADE \cite{cardoso1989source}, SOBI \cite{belouchrani1997blind}, ACMA \cite{van1996analytical}), the demixing-matrix is estimated as the matrix which best
``jointly diagonalizes", by some chosen criterion, a set of $N$ ``target-matrices" $\Q_1,\ldots,\Q_N$, constructed (in some way) from the observed mixtures (e.g., a set of sample
correlation matrices  at different lags). The procedure of AJD of such a set of $N$ matrices, each of dimensions $K \times K$, can be viewed as a symmetric canonical decomposition
(e.g., \cite{de2006link}), representing (or approximating) the respective three-way $K \times K \times N$ tensor as the sum of $K$ rank-1 tensors.

\subsection{From SeDJoCo to Extended SeDJoCo}
A particular case of this paradigm is the ``Sequentially Drilled" Joint Congruence (SeDJoCo) transformation, also termed ``Hybrid Exact-Approximate joint Diagonalization (HEAD)"
in (\cite{yeredor2009hybrid,yeredor2010blind}). SeDJoCo essentially prescribes the likelihood equations for Maximum Likelihood (ML) estimation in the semi-blind separation
scenario under a multivariate Gaussian model. More specifically, consider the classic linear, static, memoryless mixture model
\begin{equation}
	\X = \A\mS,
\end{equation}
where $\A \in \Rset^{K\times K}$ is the unknown, deterministic (invertible) mixing-matrix, $\mS = \left[\us_1\;\cdots\;\us_K\right]^{\tps}\in\Rset^{K\times T}$ is the sources
matrix of $K$ statistically independent source signals ($\us_1,\ldots,\us_K\in\Rset^{T}$) each of length $T$ (where $(\cdot)^{\tps}$ denotes the transpose), and
$\mX\in\Rset^{K\times T}$ is the observation matrix which contains the observed mixture signals. It has been shown (e.g., \cite{pham2001blind,yeredor2010blind} and
\cite{comon2010handbook} (chapter 7)), that when the source signals are zero-mean Gaussian, each with a known temporal covariance matrix $\C_k\triangleq E\left[\us_k
\us_k^{\tps}\right]\in\Rset^{T\times T}$ (all distinct from each other), the ML estimate $\hB$ of $\B \triangleq \A^{-1}$ can be obtained (up to an inevitable sign ambiguity) as
follows:
\begin{enumerate}[I.]
	\item Construct $N=K$\footnote{Note that unlike the general (and heuristic) AJD approach, in SeDJoCo the number of target-matrices equals the number of sources} symmetric
	target-matrices as
	\begin{equation}
		\Q_k \triangleq \frac{1}{T}\X\C_{k}^{-1}\X^{\tps}\in\Rset^{K\times K},\;\;\; \forall k\in\{1,\ldots,K\}.
	\end{equation}
	\item Find a matrix $\hB\in\Rset^{K\times K}$  that satisfies the following set of $K^2$ equations ($K$ vector equations, $K$ elements each)
	\begin{equation} \label{sedjoco_condition}
		\hB\Q_k\hB^{\tps}\ue_k=\ue_k,\;\;\; \forall k\in\{1,\ldots,K\},
	\end{equation}
	where the pinning vector $\ue_k$ denotes the $k$-th column of the $K\times K$ identity matrix.
\end{enumerate}
Condition \eqref{sedjoco_condition} is equivalent to requiring that the matrices $\D_k\triangleq \hB\Q_k\hB^{\tps}$ each satisfy $\D_k\ue_k=\ue_k$ (for $k=1,\ldots,K$), meaning
that the vector $\ue_k$ is an eigenvector of the matrix $\D_k$ with eigenvalue $1$. We term this structure a ``drilled" structure, hence the name of this transformation. Notice
that $\ue_k^{\tps}\D_k=\ue_k^{\tps}$ also holds, by the symmetry of all $\D_k$. This special ``drilled" structure of SeDJoCo, and its interpretation as a tensor decomposition,
are illustrated in Fig. \ref{fig1tensor} for the case of $K=3$. Here $\D_1$, $\D_2$, and $\D_3$ are treated as the first, second, and third frontal slices of the tensor
$\bm{\mathcal{D}}$, respectively. The target matrices $\Q_1$, $\Q_2$, and $\Q_3$ are associated with $\bm{\mathcal{Q}}$ in the same manner.

Another context in which the SeDJoCo solution is useful is the MU-MIMO broadcast channel Coordinated Beamforming (CBF) \cite{yeredor2012sequentially,song2010using}, when a
transmitter with $K$ antennas transmits data to $N\le K$ users, each with $K$ antennas. In order to find the beamformer which perfectly eliminates all inter-users interference, a
very similar (complex-valued) formulation of the SeDJoCo equations is obtained. In this case the target-matrices are defined as
\begin{equation}
	\Q_k=\H_k^{\rm{H}}\H_k\in\Cset^{K\times K},\;\;\; \forall k\in\{1,\ldots,K\},
\end{equation}
where $(\cdot)^{\rm{H}}$ denotes the conjugate transpose, the matrices $\H_k\in\Cset^{K\times K}$ denote the (flat fading) channel coefficients from each of the $K$ transmit antennas to each of the $K$ receive antennas of the $k$-th user ($k\in\{1,\ldots,K\}$). The solution matrix $\hB$ in this context contains the desired transmission beamforming
coefficients, such that its $k$-th row contains the coefficients for transmission to the $k$-th user (see \cite{yeredor2012sequentially} for a detailed derivation in this
context).

\begin{figure}
	\centering
	\includegraphics[width=0.5\textwidth]{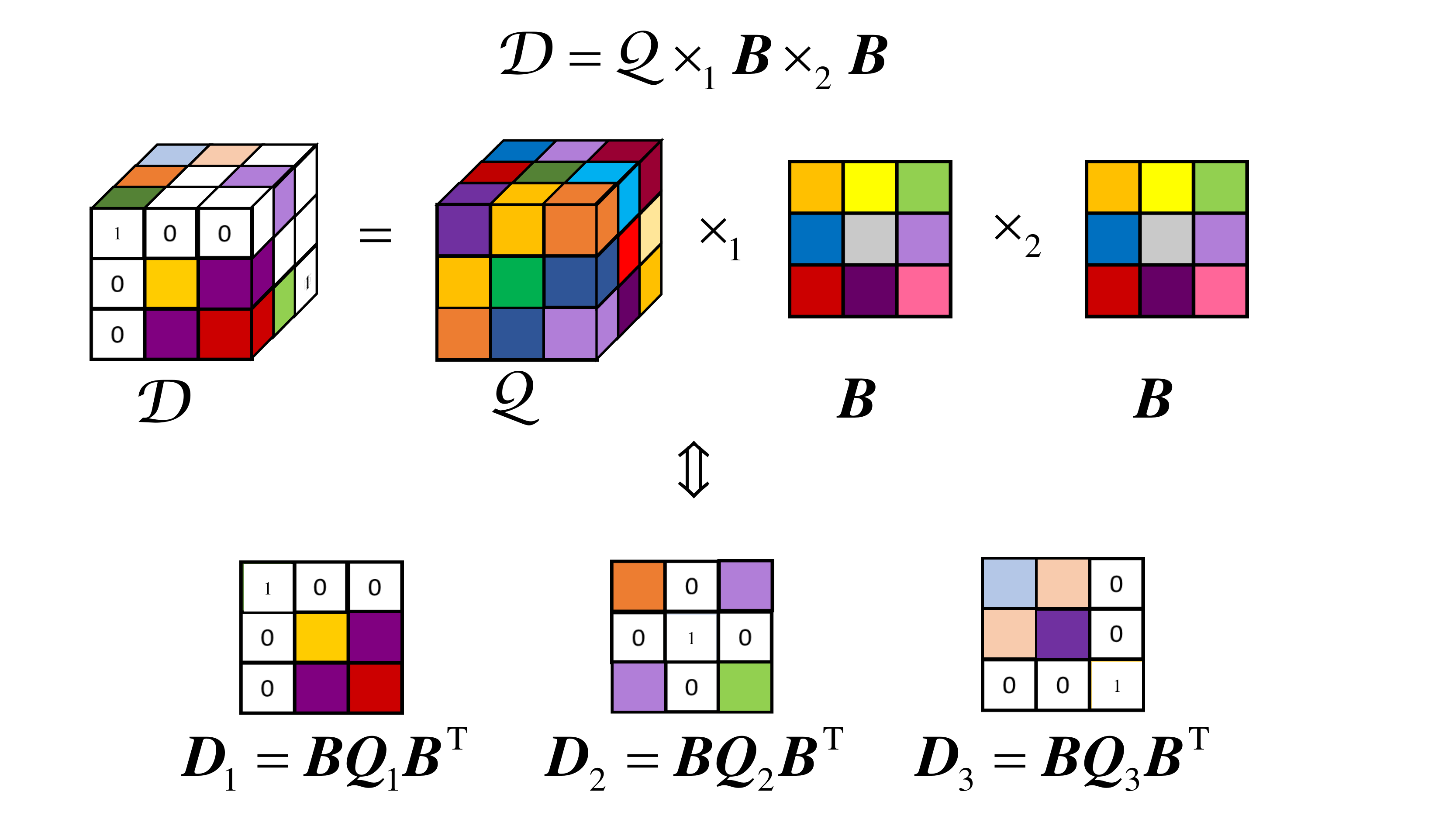}
	\caption{Illustration of SeDJoCo as a tensor decomposition or, equivalently, as a joint matrix transformation. Here, $\times_n$ denotes the $n$-mode product of a tensor with a matrix.}
	\label{fig1tensor}
	\vspace{-0.3cm}
\end{figure}

In recent years there has been a growing interest in the generalization of a single dataset BSS problem to multiple datasets BSS problem, termed Joint Blind Source Separation
(JBSS, \cite{li2009joint,anderson2012joint} and references therein). Examples of applications that motivate the interest in JBSS are the analysis of multi-subject datasets of
electroencephalography (EEG) data (\cite{li2011joint,chatel2013joint}) and functional magnetic resonance imaging (fMRI, \cite{lee2008independent}). A particular case of JBSS is
the extension of the ICA model, termed IVA, which is formulated as follows. Consider $M$ datasets of mixtures
\begin{equation} \label{IVA_model}
	\X^{(m)}=\A^{(m)}\mS^{(m)},\;\;\; \forall m\in\{1,\ldots,M\},
\end{equation}
where $\mS^{(m)}=[\us_1^{(m)}\;\cdots\;\us_K^{(m)}]^{\tps}\in\Rset^{K\times T}$ denotes a matrix of $K$ source signals of length $T$ (for all $m\in\{1,\ldots,M\}$), belonging to
the $m$-th dataset out of $M$ such sets. In each dataset the sources are mixed with an unknown (deterministic) respective mixing-matrix $\A^{(m)}\in\Rset^{K\times K}$, and the
observed mixture signals are given by $\X^{(m)}\in\Rset^{K\times T}$. Based on the observed mixtures datasets $\left\{\X^{(m)}\right\}_{m=1}^{M}$, it is desired to estimate all
$M$ mixing-matrices and thereby recover the source signals. In the same manner as in the standard ICA model, in IVA, too, the sources within each dataset are assumed to be
mutually statistically independent. Clearly, IVA amounts to $M$ independent standard ICA problems when no statistical dependence between source signals across different datasets
exists. However, in IVA statistical dependence between respective sources from different datasets is considered, i.e., the vector $\us_k^{(m_1)}$ may depend on the vector
$\us_k^{(m_2)}$ (for all $m_1,m_2\in\{1,\ldots,M\}$ and all $k\in\{1,\ldots,K\}$), but any two vectors $\us_{k_1}^{(m_1)}$ and $\us_{k_2}^{(m_2)}$ are statistically independent
when $k_1 \neq k_2$ for any $m_1,m_2 \in\{1,\ldots,M\}$. One example which is suitable for this model is group fMRI data analysis, where coherence between estimates of the source
signals across different subjects (i.e., datasets) is exploited for post-analysis of the data, e.g., for group level inference and for the study of inter-subject variability
\cite{calhoun2001method}. Another example is the separation of mixtures of color images \cite{via2011maximum}\addra{. Suppose that we are given a set of $K=2$ linear mixtures of two color images, where each image consists of three color layers (Red, Green and Blue), and the respective layers from each image are mixed separately so as to form the respective layers of the mixed images. So there are $M=3$ mixtures sets, one for each color layer. The two mixed images are independent, but the color layers in each image are usually strongly correlated, giving rise to dependence between sets while maintaining independence within sets.}

As we shall show immediately, it turns out that in the zero-mean Gaussian model the resulting likelihood equations for obtaining the ML estimates of the matrices
$\left\{\B^{(m)}\triangleq {\A^{(m)-1}}\right\}_{m=1}^{M}$ in the IVA problem require a solution of what can be regarded as an extension to the SeDJoCo problem, which we term an
``extended" SeDJoCo problem.

\subsection{Extended SeDJoCo as the Likelihood Equations}
In order to simplify the exposition, we introduce an equivalent formulation for the IVA model \eqref{IVA_model}. Define the block diagonal matrix $\bA \triangleq
\text{Bdiag}\left(\A^{(1)},\ldots,\A^{(M)}\right) \in \Rset^{KM \times KM}$, where the $\text{Bdiag}(\cdot)$ operator creates a block-diagonal matrix from its square matrix
arguments. Additionally, define the matrices $\bS \triangleq \left[{\mS^{(1)\tps}}\;\cdots\;{\mS^{(M)\tps}}\right]^{\tps} \in \Rset^{KM \times T}$ and $\bX \triangleq
\left[{\X^{(1)\tps}}\;\cdots\;{\X^{(M)\tps}}\right]^{\tps} \in \Rset^{KM \times T}$. Model \eqref{IVA_model} can now be more compactly expressed as
\begin{equation} \label{IVA_compact_model}
	\bX=\bA\bS.
\end{equation}
Left-multiplying by $\bB \triangleq \bA^{-1}$ and applying the $\text{vec}(\cdot)$ operator (which concatenates the columns of an $M \times N$ matrix into an $MN \times 1$ vector) we get
\begin{equation} \label{IVA_vec_model}
	\text{vec}(\bS)=(\I_T \otimes \bB)\text{vec}(\bX) \in \Rset^{KMT \times 1},
\end{equation}
where $\I_T$ is the identity matrix of dimension $T$ and $\otimes$ denotes the Kronecker product.

At this point we recall that the term ``blind" usually implies that no information is available regarding the sources, except for their mutual independence within each dataset
(hence, the term IVA). In a ``semi-blind" scenario, more {\it a-priori} structural or statistical information about the sources might be available. This paper addresses the
semi-blind scenario with {\it a-priori} knowledge regarding the joint distribution of the sources. In particular, we assume that the source signals are zero-mean Gaussian with
(known) temporal covariance matrices $\C_k^{(m_1,m_2)} \triangleq E\left[\us_k^{(m_1)} {\us_{k}^{(m_2)\tps}}\right]\in\Rset^{T\times T}$, namely $\C_k^{(m_1,m_2)}$ is the
temporal covariance matrix between the $k$-th source at the $m_1$-th set and the $k$-th source at the $m_2$-th set. \addra{Admittedly, such a scenario may seem too far-fetched in practice. We provide some possible justifications thereto (including a description of a practical example where such prior knowledge may be available) in the next subsection. However, to proceed with the exposition, assume for now that such prior knowledge is indeed available.}

Once the distribution of the sources is known (denoted by $p_{\tiny{\bS}}(\bS)$), the parameterized probability density of the observed mixtures can be expressed as
\begin{equation} \label{pdf_of_x}
	\begin{aligned}
		p_{\tiny{\bX}}(\bX;\bB) &= \left|\text{det}(\I_T \otimes \bB)\right|p_{\tiny{\bS}}\left((\I_T \otimes \bB)\text{vec}(\bX)\right)\\
		&=\left|\text{det}\bB\right|^{T}p_{\tiny{\bS}}\left((\I_T \otimes \bB)\text{vec}(\bX)\right).
	\end{aligned}
\end{equation}
For the explicit expression of the likelihood function of $\bB$ we must obtain an explicit form of $p_{\tiny{\bS}}(\bS)$. Fortunately, things can be simplified by exploiting the
statistical independence between all Source Component Vectors (SCVs), defined as
\begin{equation} \label{SCV_def}
	\bs_k \triangleq \text{vec}(\mS_k^{\addrb{\tps}}),\;\;\; \forall k\in\{1,\ldots,K\},
\end{equation}
where $\mS_k$ is the $k$-th source component matrix, defined as
\begin{equation} \label{SCM_def}
	\mS_k \triangleq \left[\us_k^{(1)}\;\cdots\;\us_k^{(M)}\right]^{\tps} \in \Rset^{M \times T}.
\end{equation}
The covariance matrix of each SCV is given by
\begin{equation} \label{SCM_def}
	\bC_k \triangleq E\left[\bs_k\bs_k^{\tps}\right] =\begin{bmatrix}
		\C_k^{(1,1)} & \cdots & \C_k^{(1,M)}\\
		\vdots & \ddots & \vdots\\
		\C_k^{(M,1)} & \cdots & \C_k^{(M,M)}\end{bmatrix} \in \Rset^{MT \times MT},
\end{equation}
and we denote the respective block-partition of its inverse as
\begin{equation}
	\bC_k^{-1} \triangleq \begin{bmatrix}
		\P_k^{(1,1)} & \cdots & \P_k^{(1,M)}\\
		\vdots & \ddots & \vdots\\
		\P_k^{(M,1)} & \cdots & \P_k^{(M,M)}\end{bmatrix}\addra{\triangleq\bP_k},
\end{equation}
where $\P_k^{(m_1,m_2)} \in \Rset^{T \times T}$, to be used below. Using the Gaussian distribution of the sources we have
\begin{equation} \label{pdf_of_s}
	\begin{aligned}
		p_{\tiny{\bS}}(\bS)&={\displaystyle \prod_{k=1}^{K} p_{\bs_k}(\bs_k)}\\
		&={\displaystyle \prod_{k=1}^{K} (2\pi)^{-\frac{MT}{2}}\left|\text{det}\bC_k\right|^{-\frac{1}{2}}\exp\left(-\frac{1}{2}\bs_k^{\tps}\bC_k^{-1}\bs_k\right)},
	\end{aligned}
\end{equation}
and using
\begin{multline} \label{source_k_from_x}
	%\begin{split}
	\B^{(m)}\X^{(m)}=\mS^{(m)} \Leftrightarrow \ue_{k}^{\tps}\B^{(m)}\X^{(m)}={\us_{k}^{(m)\tps}},\\
	\forall k\in\{1,\ldots,K\},\forall m\in\{1,\ldots,M\},
	%\end{split}
\end{multline}
we obtain the normalized log-likelihood of $\bB$, given by
\begin{equation} \label{likelihood_of_B}
	\begin{aligned}
		&\mathcal{L}(\bB) = \mathcal{L}(\B^{(1)},\ldots,\B^{(M)}) \triangleq \frac{1}{T}\log p_{\tiny{\bX}}(\bX;\bB) \\
		& = \gamma+\log\left|\text{det}\bB\right|\\
		&-\frac{1}{2T}{\displaystyle \sum_{k=1}^{K}\sum_{\substack{m_1=1\\m_2=1}}^{M}\ue_k^{\tps}\B^{(m_1)}\X^{(m_1)}\P_{k}^{(m_1,m_2)}{\X^{(m_2)\tps}}{\B^{(m_2)\tps}}\ue_k}\\
		%&\frac{1}{2T}{\displaystyle \sum_{k=1}^{K}\sum_{m_1=1}^{M}\sum_{m_2=1}^{M}\ue_k^{\tps}\B^{(m_1)}\X^{(m_1)}\P_{k}^{(m_1,m_2)}{\X^{(m_2)}}^{\tps}{\B^{(m_2)}}^{\tps}\ue_k}\\
		& = \gamma+\sum_{\ell=1}^{M}\log\left|\text{det}\B^{(\ell)}\right|\\
		&-\frac{1}{2}{\displaystyle \sum_{k=1}^{K}\sum_{\substack{m_1=1\\m_2=1}}^{M}\ue_k^{\tps}\B^{(m_1)}\Q_{k}^{(m_1,m_2)}{\B^{(m_2)\tps}}\ue_k},
	\end{aligned}
\end{equation}
where the matrices $\Q_{k}^{(m_1,m_2)}$ (to be later referred to as the target-matrices), are defined as
\begin{multline} \label{target_matrices_definition}
	%\begin{split}
	\Q_{k}^{(m_1,m_2)} \triangleq \frac{1}{T}\X^{(m_1)}\P_{k}^{(m_1,m_2)}{\X^{(m_2)\tps}},\\
	\forall k\in\{1,\ldots,K\},\; \forall m_1,m_2\in\{1,\ldots,M\},
	%\end{split}
\end{multline}
and where $\gamma$ is a constant independent of $\bB$. Since the ML estimate $\hB_{\text{ML}}^{(1)},\ldots,\hB_{\text{ML}}^{(M)}$ of $\B^{(1)},\ldots,\B^{(M)}$ are the global
maximizers of the likelihood function, we seek the solution of
\begin{equation} \label{derivative_of_likelihood}
	\frac{\partial \mathcal{L}(\bB)}{\partial \B^{(m)}} \mathrel{\stackon[1pt]{$=$}{$\scriptstyle!$}} \textrm{O},\;\;\;\forall m\in\{1,\ldots,M\},
\end{equation}
\addra{(where $\mathrel{\stackon[1pt]{$=$}{$\scriptstyle!$}}$ denotes a demand for equality)} which corresponds to the global maximum, where $\textrm{O} \in \Rset^{K \times K}$ is the all-zeros matrix. Indeed, differentiating $\mathcal{L}(\bB)$ w.r.t. $\B^{(m)}$ and
equating to zero yields  (see Appendix \ref{appendix_a} for details)
\begin{equation} \label{derivative_of_likelihood_2}
	\begin{aligned}
		&\left.{\frac{\partial \mathcal{L}(\bB)}{\partial \B^{(m)}}}\right|_{\widehat{\bB}_{\text{ML}}} = \frac{\partial}{\partial \B^{(m)}}\left(\sum_{\ell=1}^{M}\log\left|\text{det}\B^{(\ell)}\right|\right.\\
		&\left.\left.-\frac{1}{2}{\displaystyle \sum_{k=1}^{K}\sum_{\substack{m_1=1\\m_2=1}}^{M}\ue_k^{\tps}\B^{(m_1)}\Q_{k}^{(m_1,m_2)}{\B^{(m_2)\tps}}\ue_k}+\gamma\right)\right|_{\widehat{\bB}_{\text{ML}}}\\
		%& = {\A^{(m)}}^{\tps}-\frac{1}{2}{\displaystyle \sum_{k=1}^{K}\sum_{\substack{m_1=1\\m_2=1}}^{M}\frac{\partial}{\partial \B^{(m)}}\left(\ue_k^{\tps}\B^{(m_1)}\Q_{k}^{(m_1,m_2)}{\B^{(m_2)}}^{\tps}\ue_k\right)} \\
		%& = {{\A_{\text{ML}}}^{(m)}}^{\tps}-\sum_{k=1}^{K}\sum_{\ell=1}^{M}\E_{kk}\B^{(\ell)}\Q_{k}^{(\ell,m)} \mathrel{\stackon[1pt]{$=$}{$\scriptstyle!$}} \textrm{O},\\
		& = {\hA_{\text{ML}}}^{(m)\tps}-\sum_{k=1}^{K}\sum_{\ell=1}^{M}\E_{kk}\hB_{\text{ML}}^{(\ell)}\Q_{k}^{(\ell,m)} \mathrel{\stackon[1pt]{$=$}{$\scriptstyle!$}} \textrm{O}, \;\;\; \forall m\in\{1,\ldots,M\},
	\end{aligned}
\end{equation}
where $\left\{\hA_{\text{ML}}^{(m)}\triangleq\hB_{\text{ML}}^{(m)-1}\right\}_{m=1}^M$ and $\E_{ij}\triangleq\ue_i\ue_j^{\tps}$. So
\begin{equation} \label{almost_e_sedjoco}
	\sum_{k=1}^{K}\sum_{\ell=1}^{M}\E_{kk}\hB_{\text{ML}}^{(\ell)}\Q_{k}^{(\ell,m)}={\hA_{\text{ML}}}^{(m)\tps},\;\;\;\forall m\in\{1,\ldots,M\}.
\end{equation}
Transposing and left-multiplying by $\hB_{\text{ML}}^{(m)}$ we get
\begin{equation} \label{e_sedjoco}
	\begin{aligned}
		& \sum_{k=1}^{K}\sum_{\ell=1}^{M}\hB_{\text{ML}}^{(m)}\Q_{k}^{(m,\ell)}{\hB_{\text{ML}}}^{(\ell)\tps}\E_{kk}=\I_K,  \;\;\;\forall m\in\{1,\ldots,M\}\\
		&\quad\quad\quad\quad\quad\quad\quad\quad\quad\quad\quad\quad\Leftrightarrow \\
		& \sum_{\ell=1}^{M}\hB_{\text{ML}}^{(m)}\Q_{k}^{(m,\ell)}{\hB_{\text{ML}}}^{(\ell)\tps}\ue_k \triangleq \D_{k}^{(m)}\ue_k = \ue_k,\\
		&\quad\quad\quad\quad\quad\quad\quad\quad\forall m\in\{1,\ldots,M\},\forall k\in\{1,\ldots,K\},
	\end{aligned}
\end{equation}
where we have used $\Q_{k}^{(m_1,m_2)}={\Q_{k}^{(m_2,m_1)\tps}}$. The set of equations \eqref{e_sedjoco} (and, equivalently, \eqref{almost_e_sedjoco}) constitutes the {\it
	extended SeDJoCo} problem. It is easy to see that for the particular case where $M=1$, the problem boils down to the standard (single dataset) SeDJoCo problem
\cite{yeredor2012sequentially}. We emphasize that like in standard SeDJoCo, this means that the $k$-th column of the matrix $\D_{k}^{(m)}$ should be ``drilled" (namely be
all-zeros except for its $k$-th element equaling $1$), but unlike standard SeDJoCo, here the matrix $\D_{k}^{(m)}$ is not necessarily symmetric, in general, so the extended
SeDJoCo transformation is not symmetric in the sense that the rows of $\D_{k}^{(m)}$ are generally not ``drilled". In other words, $\ue_k$ is a right-eigenvector of $\D_{k}^{(m)}$
(corresponding to the eigenvalue $1$), whereas $\ue_k^{\tps}$ is not necessarily a left-eigenvector of $\D_{k}^{(m)}$.

When the source signals in different datasets are statistically independent, the matrices $\bC_k$, as well as their inverses, become block-diagonal, so that all
$\P_{k}^{(m_1,m_2)}$ (and therefore also all $\Q_k^{(m_1,m_2)}$) vanish for all $m_1 \ne m_2$. As a result, the problem reduces to a set of $M$ ``standard" (decoupled) SeDJoCo
problems, since only one element (corresponding to $\ell=m$) is left in each sum in \eqref{e_sedjoco}. However, in a ``true" IVA setup sources from different datasets \textit{are}
correlated, giving rise to a non-degenerate extended SeDJoCo problem.

\addra{\subsection{Justification of the Semi-Blindness Assumption}}

\addra{As already mentioned, our semi-blind scenario, which assumes prior knowledge of the full SCV covariance matrices (in addition to the Gaussianity assumption), may seem questionable. There are, however, several possible arguments in support of considering such a scenario - including a specific practical example.}

\begin{itemize}
	
	\addra{\item From a theoretical point of view: ML estimation always assumes prior knowledge of the full statistical model of the observations, up to the unknown parameters (to be estimated). Such knowledge is not always realistic in practice, but still ML estimation enjoys tremendous popularity as a baseline theoretical approach. Thus, one of our objectives is to derive the ML estimation in the IVA context with a general Gaussian model, and to show how prior knowledge of the SCVs' covariances (which is sufficient in the Gaussian case) can be exploited in an optimal manner if and when it is available.}
	
	\addra{\item Assuming a ``training" period: In some applications the user might have access to the unmixed source signals during some ``training" period, before actually observing the mixtures in the ``operational" period. Assuming that the statistical properties of the sources during the training period remain valid during the operational period, these properties may be estimated during the training period, to be used in turn during the operational period.}
	
	\addra{\item A possible iterative scheme: When the covariance matrices are not known {\it a-priori}, but can be succinctly parameterized (e.g., in the case of stationary parametric auto-regressive / moving average sources), an iterative separation strategy may be used as follows: The sources are first estimated by some initial (non-ML) separation (if possible), which may not be optimal, but would still be reasonable enough to allow subsequent estimation of the required parameters for the covariance matrices from the separated signals. Then a semi-blind framework would be applied using the estimated covariance matrices, possibly with successive refinements by repeating the process - thereby approaching asymptotic optimality. In fact, it is also conceivable to operate such a process in an adaptive (rather than a batch) scheme, in which the estimation of the covariance matrices is interlaced with the estimation of the unmixing matrices as new measurements keep flowing in. However, this is a topic for further research, beyond the scope of the current paper.}
	
	\addra{\item A ``real world" practical example: Consider $K$ different sources (e.g., transmitters), remotely positioned at known locations, transmitting different, independent signals, which all have the same (known) spectrum and can be assumed Gaussian. This is a very common assumption, e.g., when the transmitters transmit different communication signals with the same standard modulation, which has a known spectrum (implying a known autocorrelation function). \addrb{For example, Orthogonal Frequency Division Multiplexing (OFDM) digital communication signals are commonly modeled as Gaussian, see, e.g., \cite{banelli2000theoretical}}. Assume first that these signals are received at a single site by $K$ different sensors (antennas) with different (unknown, or uncalibrated) radiation patterns, directed at different directions (not necessarily in the form of a calibrated phased array). The signal received at each antenna is then a different (unknown) linear mixture of all $K$ sources. Note that the mixture is non-separable (without further information on the signals) since it consists of Gaussian sources with identical spectra.
		
		Now assume a second, distant reception site with $K$ similar antennas. The source signals received at this site are differently-delayed versions of those from the other site, due to propagation delays. Knowing the positions of the sources and of the sensor sites, these delays can be calculated, and can be readily used to obtain the cross-correlations between respective sources at both sites, which is simply their autocorrelations, shifted by the respective (positive or negative) delay differences. Thus, the $2K$ signals at the two sites give rise to an IVA problem with $M=2$ sets (easily extended to any $M$ by adding more sites), in a semi-blind Gaussian scenario: all necessary covariance (and cross-covariance) matrices are known. Moreover, the signals at each site alone are non-separable - yet, using our semi-blind IVA scheme we can obtain optimal separation of the sources.
		
		The fine details of this scenario introduce some minor complications (e.g., in the context of communication signals, a complex-valued extension of our results should be used), so we shall not pursue this problem in here any further - however we did get good separation results with this scenario (even in the presence of additive noise), so this is at least one practical example where prior knowledge of the required covariance matrices is quite realistic.}
	
\end{itemize}

Interestingly, \addra{we may also add that} a very similar formulation \addra{to the extended SeDJoCo} can again be linked to a CBF problem, but in an extended multi-cast framework, where $M$ inter-connected transmitters, each with $K$ antennas and $K$ associated users are required to attain perfect interference cancellation transmission.
%In Appendix \ref{appendix_b} we describe this scenario in details along
%with the full mathematical derivation which leads to the following set of equations
This scenario is described in \cite{cheng2015extension} in detail along with the mathematical derivation which leads to the following set of equations
\begin{multline} \label{e_sedjoco_2}
	\sum_{\ell=1}^{M}\B^{(\ell)}\Q_k^{(\ell,m)}{\B^{(m)\rm{H}}}\ue_k\triangleq\tD_k^{(m)}\ue_k=\alpha\ue_k,\\
	\forall k\in\{1,\dots,K\} \; \forall m\in\{1,\ldots,M\},
\end{multline}
where in this case the target-matrices are defined as
\begin{multline}\label{CBF_target_mat}
	\Q_k^{(m_1,m_2)}\triangleq{\H_k^{(m_2,m_1)\rm{H}}}\H_k^{(m_2,m_2)}\in\Cset^{K\times K},\\
	\forall k \in \{1,\ldots,K\} \; \forall m_1,m_2 \in \{1,\ldots,M\}\addra{,}
\end{multline}
\addra{where each matrix $\H_k^{(m_1,m_2)}\in\Cset^{K\times K}$ contains the flat fading channel coefficients from the $K$ antennas of the $m_2$-th transmitter to the $K$ antennas of the $k$-th associated user of the $m_1$-th transmitter.} Evidently, these equations resemble the extended SeDJoCo equations \eqref{e_sedjoco} (with a change of the summation index; with a conjugate transpose replacing the transpose; and
with the allowed scaling factor $\alpha$). The solution of the resulting extended SeDJoCo problem would enable interference free delivery of the intended data streams to all
users. Note that the only main difference of \eqref{e_sedjoco_2} from \eqref{e_sedjoco} is captured by the (different) definitions of the transformed matrices $\tD_k^{(m)}$ and
$\D_k^{(m)}$.

\delra{The appearance of extended SeDJoCo (up to minor differences) in these two different contexts of IVA and CBF, along with its special structure, and the belief that it may appear in
	other applications employing joint matrix transformations, are}
\addra{These arguments serve as} the basis for our motivation to delve deeper into this problem and \addra{to} provide further results and insights regarding
these equations.

The rest of this paper is structured as follows. In Section \ref{sec_Theory} we consider theoretical aspects of the problem like an alternative formulation, conditions for the
existence of a solution and discussion of its non-uniqueness. In Section \ref{sec_iCRLB} we derive the induced Cram\'er-Rao lower bound (iCRLB, \cite{yeredor2010blind}) on the
interference-to-signal ratio (ISR) for the underlying Gaussian IVA problem. In Section \ref{sec_Solution} we propose two iterative solution algorithms for extended SeDJoCo.
Comparative simulation results are presented in Section \ref{sec_Simulation}, and Section \ref{sec_Conclusion} is devoted to conclusions.

\section{Theoretical Aspects of Extended SeDJoCo}
\label{sec_Theory}
\subsection{Problem Formulations, Existence, Non-Uniqueness}
We start with (re)formulating the (context-free) extended SeDJoCo problem. Recall that in this problem we consider $M$ sets, giving rise to $KM^2$ target-matrices, each of
dimension $K \times K$, with a solution in the form of $M$ $K \times K$ matrices. Hence, the extended SeDJoCo is stated as follows: Given $KM^2$ target-matrices
$\left\{\Q_k^{(m_1,m_2)}\right\} \; \forall k \in \{1,\ldots,K\}, \forall m_1,m_2 \in \{1,\ldots,M\}$,

\noindent \textit{P1: find a set of $M$ $K \times K$ matrices $\left\{\B^{(m)}\right\}_{m=1}^M$, such that}
\begin{multline}\label{extended_sedjoco_original}
	\left[\sum_{\ell=1}^{M}{\B^{(m)}\Q_k^{(m,\ell)}{\B^{(\ell)\tps}}}\right]\ue_k \triangleq \D_{k}^{(m)}\ue_k =\ue_k,\\
	\forall k \in \{1,\ldots,K\}, \forall m \in \{1,\ldots,M\}.
\end{multline}
The meaning of this statement is that the transformed matrices $\D_k^{(m)}$ should all be \textit{exactly} ``drilled" in their $k$-th column.

Equivalently, this problem can be stated as:

\noindent \textit{P2: find a set of $KM$ vectors $\left\{\ub_k^{(m)} \in \Rset^{K \times 1} \right\}, k \in \{1,\ldots,K\}, m \in \{1,\ldots,M\}$, such that}
\begin{multline}\label{extended_sedjoco_alternative}
	\sum_{\ell=1}^{M}{{\ub_{k_1}^{(m)\tps}}\Q_{k_2}^{(m,\ell)}}\ub_{k_2}^{(\ell)}=\delta_{k_1k_2},\\
	\forall k_1,k_2 \in \{1,\ldots,K\}, \forall m \in \{1,\ldots,M\},
\end{multline}
\textit{where $\delta_{k_1k_2}$ denotes Kronecker's delta function (which is $1$ if $k_1=k_2$ and $0$ otherwise).}

This formulation suggests that each solution vector $\ub_k^{(m)}$ (which is simply the $k$-th row of $\B^{(m)}$) is orthogonal to all (transformed) vectors $\ubeta_{k_2}^{(m)}
\triangleq \sum_{\ell=1}^{M}\Q_{k_2}^{(m,\ell)}\ub_{k_2}^{(\ell)}$ where $k_1 \ne k_2$ (for any $m \in \{1,\ldots,M\}$). With these notations, we have that ${\B^{(m)\tps}} =
\left[\ub_1^{(m)}\;\cdots\;\ub_K^{(m)}\right]$ for all $m \in \{1,\ldots,M\}$.

As seen from both formulations above, the extended SeDJoCo requires the solution of $MK^2$ equations in $MK^2$ unknowns, the elements of the matrices
$\left\{\B^{(m)}\right\}_{m=1}^{M}$. Nevertheless, since these equations are nonlinear (in particular, they contain only $2^{\text{nd}}$ order monomials of the unknowns),
conclusions regarding the existence and/or uniqueness of the solution are non-trivial. The following is a sufficient condition for the existence of a (generally non-unique)
solution. In the sequel we shall characterize a set of solutions which may exist when at least one solution exists.

\begin{thm}[a sufficient condition for existence of a solution] \label{theorem1}
	For a given set of target-matrices $\left\{\Q_k^{(m_1,m_2)}\right\} \in \Rset^{K \times K} \; \forall k \in \{1,\ldots,K\}, \forall m_1,m_2 \in \{1,\ldots,M\}$, a solution for the
	associated extended SeDJoCo problem exists if all $K$ matrices $\left\{\Omeg_k\right\}_{k=1}^{K}$, defined as
	\begin{equation} \label{augmented_taget}
		\Omeg_k \triangleq \begin{bmatrix}
			\Q_k^{(1,1)} & \cdots & \Q_k^{(1,M)}\\
			\vdots & \ddots & \vdots\\
			\Q_k^{(M,1)} & \cdots & \Q_k^{(M,M)}\end{bmatrix} \in \Rset^{KM \times KM},
	\end{equation}
	are Positive Definite (PD).
\end{thm}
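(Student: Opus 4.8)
The plan is to recast existence as an optimization statement whose stationarity conditions \emph{are} the extended SeDJoCo equations, and then to prove that the objective attains a global minimum (this mirrors the standard argument for existence of an ML estimator in a nondegenerate Gaussian model). On the open set $\mathcal{D}\triangleq\{(\B^{(1)},\ldots,\B^{(M)}):\det\B^{(m)}\neq 0\ \forall m\}$ I would consider
\begin{equation}\label{eq:proofobj}
J(\B^{(1)},\ldots,\B^{(M)})\triangleq-\sum_{m=1}^{M}\log\left|\det\B^{(m)}\right|+\frac{1}{2}\sum_{k=1}^{K}\sum_{\substack{m_1=1\\m_2=1}}^{M}\ue_k^{\tps}\B^{(m_1)}\Q_k^{(m_1,m_2)}{\B^{(m_2)\tps}}\ue_k,
\end{equation}
which has exactly the algebraic form of the negative log-likelihood $-\mathcal{L}(\bB)$ of \eqref{likelihood_of_B} (up to an additive constant). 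Hence the gradient computation \eqref{derivative_of_likelihood_2}--\eqref{almost_e_sedjoco} applies verbatim, and any interior stationary point of $J$ satisfies \eqref{almost_e_sedjoco}; transposing and left-multiplying by $\B^{(m)}$, and using $\Q_k^{(m_1,m_2)}={\Q_k^{(m_2,m_1)\tps}}$ (the symmetry of each $\Q$-pair, which is forced by $\Omeg_k$ being PD), turns this into the extended SeDJoCo system \eqref{extended_sedjoco_original}. So it suffices to produce a minimizer of $J$ in $\mathcal{D}$.

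Next I would expose coercivity. Writing $\ub_k^{(m)}\triangleq{\B^{(m)\tps}}\ue_k$ for the $k$-th row of $\B^{(m)}$, the inner double sum over $m_1,m_2$ in \eqref{eq:proofobj} is precisely the quadratic form of the block matrix $\Omeg_k$ from \eqref{augmented_taget}, evaluated at the concatenation $\bigl[{\ub_k^{(1)\tps}}\ \cdots\ {\ub_k^{(M)\tps}}\bigr]^{\tps}$; since $\Omeg_k$ is PD it is at least $\lambda_{\min}(\Omeg_k)\sum_{m=1}^{M}\|\ub_k^{(m)}\|^2$. Putting $\lambda\triangleq\min_k\lambda_{\min}(\Omeg_k)>0$ and using $\sum_k\|\ub_k^{(m)}\|^2=\|\B^{(m)}\|_{\mathrm F}^2$, the quadratic part of $J$ is bounded below by $\tfrac{\lambda}{2}\sum_m\|\B^{(m)}\|_{\mathrm F}^2$. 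On the other hand, Hadamard's inequality gives $|\det\B^{(m)}|\le(\|\B^{(m)}\|_{\mathrm F}/\sqrt K)^{K}$, so $-\log|\det\B^{(m)}|\ge-K\log\|\B^{(m)}\|_{\mathrm F}+\tfrac K2\log K$. Combining, $J\ge\sum_{m=1}^{M}g\!\left(\|\B^{(m)}\|_{\mathrm F}\right)$ with $g(t)\triangleq\tfrac{\lambda}{2}t^2-K\log t+\tfrac K2\log K$, and $g$ is bounded below and satisfies $g(t)\to+\infty$ both as $t\to0^+$ and as $t\to\infty$.

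I would then convert this into compactness of a sublevel set. The bound shows $J$ is bounded below on $\mathcal{D}$; choosing a level $c$ with $\{J\le c\}\neq\emptyset$ (take all $\B^{(m)}=\I_K$), the inequality $\sum_m g(\|\B^{(m)}\|_{\mathrm F})\le c$ forces each $\|\B^{(m)}\|_{\mathrm F}$ into a fixed compact subinterval of $(0,\infty)$. To see that $\{J\le c\}$ is also closed in $\Rset^{MK^2}$, note that if a sequence in it converged to a tuple with $\det\B^{(m_0)}=0$ for some $m_0$, then $-\log|\det\B^{(m_0)}|\to+\infty$ while every other term of $J$ stays bounded below (the quadratic part is nonnegative and the remaining log-det terms are controlled via Hadamard together with the norm bound), contradicting $J\le c$; hence the limit lies in $\mathcal{D}$ and, by continuity, still satisfies $J\le c$. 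Thus $\{J\le c\}$ is compact and contained in $\mathcal{D}$, so $J$ attains its global minimum over $\mathcal{D}$ at some tuple $(\B^{\star(1)},\ldots,\B^{\star(M)})$; being an interior minimizer of a $C^1$ function, it is a stationary point of $J$, and therefore, by the first step, $\{\B^{\star(m)}\}_{m=1}^M$ solves the extended SeDJoCo problem.

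The main obstacle --- and the only place needing more than routine care --- is this coercivity/closedness argument: each $-\log|\det\B^{(m)}|$ is individually unbounded below, so one must rule out their conspiring against the quadratic term. This is exactly where Hadamard's inequality (which bounds $-\log|\det\B^{(m)}|$ below by $-K\log\|\B^{(m)}\|_{\mathrm F}$ plus a constant, and is thus dominated by the quadratic growth $\tfrac{\lambda}{2}\|\B^{(m)}\|_{\mathrm F}^2$) and the \emph{uniform} positivity $\lambda=\min_k\lambda_{\min}(\Omeg_k)>0$ --- i.e., the hypothesis that \emph{every} $\Omeg_k$ is PD, not merely invertible --- enter essentially. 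Everything else (the gradient identity, continuity, and extraction of a convergent subsequence) is standard.
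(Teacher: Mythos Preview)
Your proof is correct and follows essentially the same route as the paper's: define the same objective (the paper works with $C=-J$), use the positive-definiteness of each $\Omeg_k$ together with Hadamard's inequality to establish coercivity/boundedness, conclude that an extremum is attained at a nonsingular tuple, and then verify that the first-order conditions coincide with the extended SeDJoCo equations. The only cosmetic differences are that the paper keeps a separate $\lambda_k$ per $\Omeg_k$ and bounds $|\det\B^{(m)}|$ row-wise rather than via the Frobenius norm, and is somewhat less explicit than you are about the compactness of sublevel sets.
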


\begin{proof}
	Let $\Omeg_1,\ldots,\Omeg_K$ denote a set of (symmetric, real-valued) PD matrices, constructed from the target-matrices as defined in \eqref{augmented_taget}, and let
	$\lambda_k>0$ denote the smallest eigenvalue of $\Omeg_k$, $k=1,\ldots,K$. Define
	\begin{equation} \label{augmented_solution_matrix}
		\tB \triangleq \left[\B^{(1)}\;\cdots\;\B^{(M)}\right] \in \Rset^{K \times KM},
	\end{equation}
	and denote its rows as $\tb_k^T, k=1,\ldots,K$. Now consider the function
	\begin{equation}
		\label{realC}
		C\left(\tB\right)\triangleq\sum_{m=1}^{M}\log\left|\det\B^{(m)}\right|-\frac{1}{2}\sum_{k=1}^K\tb_k^{\tps}\Omeg_k\tb_k.
	\end{equation}
	For all nonsingular $\B^{(1)},\dots,\B^{(M)}$, $C\left(\tB\right)$ is obviously a continuous and differentiable function of all elements of all the matrices.
	In addition, $C\left(\tB\right)$ is bounded from above:
	\begin{equation}
		\label{Cle}
		\begin{split}
			C(\tB)&=\sum_{m=1}^{M}\log\left|\det\B^{(m)}\right|-\frac{1}{2}\sum_{k=1}^K\tb_k^{\tps}\Omeg_k\tb_k\\
			&\le \sum_{m=1}^{M}\log\prod_{k=1}^K\|\ub_k^{(m)}\|-\frac{1}{2}\sum_{k=1}^K\lambda_k\tb_k^{\tps}\tb_k\\
			&=\frac{1}{2}\sum_{m=1}^{M}\sum_{k=1}^K\left\{\log\|\ub_k^{(m)}\|^2-\lambda_k\|\ub_k^{(m)}\|^2\right\}\\
			&\le\frac{1}{2}\sum_{m=1}^{M}\sum_{k=1}^K\left\{-\log\lambda_k-1\right\}\\
			&=\frac{M}{2}\sum_{k=1}^K\left\{\log\lambda_k-1\right\},
		\end{split}
	\end{equation}
	where $||\cdot||$ denotes the $\ell^2$-norm, and where we have used the properties
	\begin{enumerate}
		\item $|\det\B^{(m)}|\le\prod_{k=1}^K\|\ub_k^{(m)}\|$ (Hadamard's inequality);
		\item $\tb_k^{\tps}\Omeg_k\tb_k\ge\lambda_k\|\tb_k\|^2$;
		\item $\|\tb_k\|^2 = \sum_{m=1}^M\|\ub_k^{(m)}\|^2$; and
		\item $\log x-\lambda x\le -\log\lambda-1$ for all $x>0$.
	\end{enumerate}
	
	Note also that $C\left(\tB\right)$ tends to $-\infty$ when at least one of the matrices $\left\{\B^{(m)}\right\}_{m=1}^{M}$ approaches any singular matrix,
	and that, in addition, $C\left(\tB\right)$ has the property
	\begin{equation}
		C\left(\alpha\cdot\tB\right)\xrightarrow{\alpha\rightarrow\infty} -\infty,\;\;\; \forall\tB\in\Rset^{K \times KM}.
	\end{equation}
	Consequently, $C\left(\tB\right)$ must attain a maximum for (at least) some set of nonsingular matrices $\left\{\B^{(m)}\right\}_{m=1}^{M}$. Being a smooth function thereof
	derivative w.r.t. each $\left\{\B^{(m)}\right\}_{m=1}^{M}$ at the maximum point must vanish.
	
	Indeed, differentiating $C\left(\tB\right)$ w.r.t. $\B^{(m)}$ we get %(for all $m\in\{1,\ldots,M\}$)
	\begin{equation}
		\label{dCdB}
		\begin{split}
			&\frac{\partial C\left(\tB\right)}{\partial \B^{(m)}}={\A^{(m)\tps}}-\frac{1}{2}\frac{\partial}{\partial \B^{(m)}}\left[\sum_{k=1}^K\ue_k^{\tps}\tB\Omeg_k\tB^{\tps}\ue_k\right]\\
			&={\A^{(m)\tps}}-\frac{1}{2}\frac{\partial}{\partial \B^{(m)}}\left[\sum_{k=1}^K\sum_{\ell=1}^M\sum_{p=1}^M\ue_k^{\tps}\B^{(\ell)}\Q_k^{(\ell,p)}{\B^{(p)\tps}}\ue_k\right]\\
			&={\A^{(m)\tps}}-\frac{1}{2}\sum_{k=1}^K\sum_{\ell=1}^M2\E_{kk}\B^{(\ell)}\Q_k^{(\ell,m)}\\
			&={\A^{(m)\tps}}-\sum_{k=1}^K\sum_{\ell=1}^M\E_{kk}\B^{(\ell)}\Q_k^{(\ell,m)}, \; \forall m\in\{1,\ldots,M\},
		\end{split}
	\end{equation}
	where $\A^{(m)}\triangleq {\B^{(m)-1}}$, and we have used the properties stated in Appendix \ref{appendix_a}. And so, equating to zero, transposing and left-multiplying by
	$\B^{(m)}$, we get
	\begin{equation}
		\label{result_e_sedjoco}
		\sum_{k=1}^K\sum_{\ell=1}^M\B^{(m)}\Q_k^{(m,\ell)}{\B^{(\ell)\tps}}\E_{kk}=\I_K,  \;\;\;\forall m \in \{1,\ldots,M\}
	\end{equation}
	which implies \eqref{extended_sedjoco_original}. This means that a solution of extended SeDJoCo must exist as the maximizer of $C\left(\tB\right)$, as long as
	$\left\{\Omeg_k\right\}_{k=1}^K$ are all PD. \vspace*{-0.5cm}
	\begin{align}
		\tag*{$\square$}
	\end{align}
\end{proof}
\vspace*{-0.2cm}
Note that this general result holds for any extended SeDJoCo problem, and is not limited to the context of IVA. %Should this special-structured transformation appear in other
%contexts or applications, this result might be of significant importance in that context.
\addra{For an IVA problem, when the $\Q_k^{(m_1,m_2)}$ matrices are given by \eqref{target_matrices_definition}, it is easy to show that if all the SCVs covariance matrices $\bC_k$ are PD and finite, so are all $\Omeg_k$, with probability $1$ (w.p.1). To observe this, note that in this case each $\Omeg_k$ can be expressed as $\Omeg_k=\Chi\bP_k\Chi^\tps$, where $\Chi\triangleq\text{Bdiag}\left(\X^{(1)},\ldots,\X^{(M)}\right)\in\Rset^{KM\times MT}$. If $\bC_k$ is PD and finite, so is $\bP_k$, and since $\Chi$ is full rank w.p.1, $\Omeg_k$ is also PD (w.p.1). It is important to realize, however, that existence of an extended SeDJoCo solution in a given IVA problem does not necessarily imply separability, because an infinite number of solutions may exist over a manifold in the parameters space. In fact, this is what happens when the identifiability conditions stated in \cite{anderson2014independent} are not satisfied, e.g., when two Gaussian SCVs have the same covariance matrices – in which case the iCRLB is infinite for the associated ISRs. When the identifiability conditions in \cite{anderson2014independent} are satisfied, then although the SeDJoCo solution is still not unique (see below), the multiple solutions are all isolated (w.p.1), and only one corresponds to the global maximum of the likelihood function}.

Having addressed the issue of existence of a solution, we proceed to discuss the issue of uniqueness. \addrb{Note first, that since, as mentioned above, extended SeDJoCo is a system of $MK^2$ equations in $MK^2$ unknowns (elements of $\tB$), where each equation is a second degree multinomial in the unknowns, B\'{e}zout's theorem (e.g., \cite{gibson2001elementary}) asserts that there are at most $2^{MK^2}$ distinct real-valued solutions. Indeed, according to our experience, when a solution exists, it is not unique, in general. Moreover, we can generally characterize $K!$ essentially different solutions whenever a single solution exists.
	
	To this end, assume an extended SeDJoCo problem associated with a set of target-matrices $\left\{{\Q_{k}^{(m_1,m_2)}}\right\}$ that satisfy the existence condition (all implied $\Omeg_k$ matrices are PD). Denote by $\left\{\B^{(m)}\right\}_{m=1}^{M}$ a solution to this problem. Now define a new set of target-matrices $\left\{\brvQ_k^{(m_1,m_2)}\right\}$, such that  ($\forall
	m_1,m_2\in\{1,\ldots,M\}$)
	\begin{equation}\label{newsetfirsttwo}
		\brvQ_1^{(m_1,m_2)}=\Q_2^{(m_1,m_2)} \; , \; \brvQ_2^{(m_1,m_2)}=\Q_1^{(m_1,m_2)},
	\end{equation}
	and
	\begin{equation}\label{newset}
		\brvQ_k^{(m_1,m_2)}=\Q_k^{(m_1,m_2)} \quad \forall k \in \{3,\ldots,K\},
	\end{equation}
	thereby defining a modified (permuted) extended SeDJoCo problem. Obviously, the modified problem also has at least one solution (the existence condition is still satisfied), but a solution of the original problem generally does not solve this modified problem. However, an iterative algorithm starting at the solution $\left\{\B^{(m)}\right\}_{m=1}^{M}$ of the original problem is likely to reach an essentially different solution of the modified problem, in which all elements of the resulting solution matrices are generally different from all elements of the solution matrices of the original problem, because different coefficients now multiply different products of the unknowns in the system of equations. Thus, let us denote by $\left\{\brvB^{(m)}\right\}_{m=1}^{M}$ the resulting solution of the modified problem.
	
	Now let $\PI_{1,2} \in \Rset^{K \times K}$ denote the (symmetric) permutation matrix that swaps the first and second elements of a vector,
	namely $\PI_{1,2}\ue_1=\ue_2$, $\PI_{1,2}\ue_2=\ue_1$ and $\PI_{1,2}\ue_k=\ue_k$ for all other $k \in \{3,\ldots,K\}$. Consider the set of matrices
	$\left\{\B^{'(m)}\triangleq\PI_{1,2}\brvB^{(m)}\right\}$ for all $m \in \{1,\ldots,M\}$. We assert that this set of matrices solves the extended SeDJoCo problem induced by the
	\textit{original} set $\left\{\Q_{k}^{(m_1,m_2)}\right\}$, since
	\begin{equation}\label{newsetfirsttwo}
		\begin{split}
			&\left[\sum_{m=1}^{M}\B'^{(\ell)}\Q_1^{(\ell,m)}{\B'^{(m)\tps}}\right]\ue_1\\
			%&=\PI_{1,2}\left[\sum_{m=1}^{M}\brvB^{(\ell)}\brvQ_2^{(\ell,m)}{\brvB^{(m)}}^{\tps}\right]{\PI_{1,2}}^{\tps}\ue_1\\
			%&=\PI_{1,2}\left[\sum_{m=1}^{M}\brvB^{(\ell)}\brvQ_2^{(\ell,m)}{\brvB^{(m)}}^{\tps}\right]\ue_2\\
			&=\PI_{1,2}\left[\sum_{m=1}^{M}\brvB^{(\ell)}\brvQ_2^{(\ell,m)}{\brvB^{(m)\tps}}\right]{\PI_{1,2}}^{\tps}\ue_1\\
			&=\PI_{1,2}\left[\sum_{m=1}^{M}\brvB^{(\ell)}\brvQ_2^{(\ell,m)}{\brvB^{(m)\tps}}\right]\ue_2\\
			&=\PI_{1,2}\ue_2=\ue_1, \;\;\; \forall \ell \in \{1,\ldots,M\},
		\end{split}
	\end{equation}
	and, similarly, $\left[\sum_{m=1}^{M}\B'^{(\ell)}\Q_2^{(\ell,m)}{\B'^{(m)\tps}}\right]\ue_2=\ue_2$, and, of course,
	$\left[\sum_{m=1}^{M}\B'^{(\ell)}\Q_k^{(\ell,m)}{\B'^{(m)\tps}}\right]\ue_k=\ue_k$ for all other $k \in \{3,\ldots,K\}$ (for all $\ell \in \{1,\ldots,M\}$). This means that \addra{in addition to $\left\{\B^{(m)}\right\}_{m=1}^M$ there exists} \delra{we have found} an additional solution \addra{$\left\{\B'^{(m)}\right\}_{m=1}^M$} to the original extended SeDJoCo problem, \addra{which is} \delra{via} a permutation of an ``essentially differernt" solution \addra{$\left\{\brvB^{(m)}\right\}_{m=1}^M$} to a permuted extended SeDJoCo problem. Since any permutation matrix can be expressed as the product of two-elements-permutation matrices, we may generalize the above result, i.e., there may exist, in general, $K!$ (the number of possible permutations) such different solutions for a $K$-dimensional extended SeDJoCo problem.
	
	Note that, strictly speaking, we did not {\it prove} that the permuted extended SeDJoCo problem yields an {\it essentially different} solution, since theoretically the resulting solution of the permuted problem may just be a permuted version of the solution to the original problem. However, based on our empirical experience, we conjecture that with randomly generated target matrices (such as in IVA), an iterative algorithm starting at a solution of the original problem would ``almost surely" reach an essentially different (not just permuted) solution of the permuted problem. }\addra{In the context of our IVA problem, all these $K!$ different solutions would be local maxima of the Likelihood function, but (w.p.1) only one of them would correspond to the global maximum, and may be found using strategies such as those advocated (in the context of ICA) in \cite{yeredor2016multiple, weiss2017maximum}.}

\section{iCRLB on the ISR for JBSS}
\label{sec_iCRLB}
The ISR is a common measure in BSS which quantifies the ``quality" of separation. More specifically, by definition
\begin{multline} \label{ISR_defenition}
	\text{ISR}^{(m)}_{\mathhlra{_{k\ell}}} \triangleq E\left[\frac{\left|\left(\hB^{(m)}\A^{(m)}\right)_{\mathhlra{_{k\ell}}}\right|^2}{\left|\left(\hB^{(m)}\A^{(m)}\right)_{\mathhlra{_{kk}}}\right|^2}\right]\cdot \frac{E\left[\us^{(m)\tps}_{\mathhlra{_\ell}}\us^{(m)}_{\mathhlra{_\ell}}\right]}{E\left[\us^{(m)\tps}_{\mathhlra{_k}}\us^{(m)}_{\mathhlra{_k}}\right]},\\
	\forall \mathhlra{k,\ell} \in \{1,\ldots,K\}, \mathhlra{k} \neq \mathhlra{\ell}, \forall m \in \{1,\ldots,M\},
\end{multline}
measures the expected relative residual energy of the \addra{$\ell$-th} source in the reconstruction of the \addra{$k$-th} source in the \addra{$m$-th} dataset.

\addra{By deriving the CRLB on the estimation of the mixing matrix and using the equivariance property, it is possible to obtain the iCRLB on the ISR (e.g., \cite{doron2007cramer, comon2010handbook, yeredor2010blind, anderson2014independent}). In \cite{anderson2014independent} general expressions for the iCRLB are provided in the context of a general IVA problem. It is shown that
	\begin{equation}
		\label{eq:ISR_Ka}
		\text{ISR}_{k\ell}^{(m)}\ge \frac{1}{T}\cdot\ue_m^{\tps}\left[\Ka_{k\ell}-\Ka_{\ell k}^{-1}\right]^{-1}\ue_m\cdot\frac{\Tr\left(\C_\ell^{(m,m)}\right)}{\Tr\left(\C_k^{(m,m)}\right)},
	\end{equation}
	where $\Tr(\cdot)$ denotes the trace operator, and where the elements of the matrices $\Ka_{k\ell}\in\Rset^{M\times M}$ are defined as
	\begin{equation}
		\label{eq:defKa}
		\mathcal{K}_{k\ell}[m,n]\triangleq \frac{1}{T}\Tr\left(\Ga_k^{(n,m)}\C_\ell^{(m,n)}\right),
	\end{equation}
	in which the matrices $\Ga_k^{(m,n)}\in\Rset^{T\times T}$ are defined as
	\begin{equation}
		\label{eq:defGa}
		\Ga_k^{(m,n)}\triangleq E\left[\uphi_k^{(m)\tps}\uphi_k^{(n)}\right].
	\end{equation}
	Here $\uphi_k^{(m)}\in\Rset^{1\times T}$ denotes the score vector of the $m$-th component of the $k$-th source, namely the derivative of the (negative) log of the probability distribution of the $k$-th SCV $\bs_k$ w.r.t.\ its $m$-th component $\us_k^{(m)}$. Different IVA models naturally have different score vectors, giving rise to different $\Ga_k^{(m,n)}$ matrices and, thereby, to different iCRLBs. In order to obtain an explicit expression for the iCRLB in a given IVA model, the respective $\Ga_k^{(m,n)}$ matrices need to be explicitly calculated. In \cite{anderson2014independent} the authors derive explicit results for the simple case of temporally independent, identically distributed (i.i.d.) Gaussian sources. To obtain $\Ga_k^{(m,n)}$ for our more general temporal models, recall that
	\begin{multline}
		-\log p_{\bs_k}(\bs_k)=\tfrac{1}{2}\log\text{det}|2\pi\bC_k|+\tfrac{1}{2}\bs_k^{\tps}\bC_k^{-1}\bs_k\\
		=\tfrac{1}{2}\log\text{det}|2\pi\bC_k|+\tfrac{1}{2}\bs_k^{\tps}\bP_k\bs_k,
	\end{multline}
	so that its derivative w.r.t.\ the entire SCV $\bs_k$ is given by $\bs_k^{\tps}\bP_k\in\Rset^{1\times MT}$. Its $m$-th component is therefore given by
	\begin{equation}
		\uphi_k^{(m)}=\sum_{p=1}^M \us_k^{(p)\tps}\P_k^{(p,m)}.
	\end{equation}
	Substituting in \eqref{eq:defGa} we get
	\begin{multline}
		\Ga_k^{(m,n)}=E\left[\uphi_k^{(m)\tps}\uphi_k^{(n)}\right]=E\left[\sum_{p,q=1}^M\P_k^{(m,p)}\us_k^{(p)}\us_k^{(q)\tps}\P_k^{(q,n)}\right]\\
		=\sum_{p,q=1}^M\P_k^{(m,p)}\C_k^{(p,q)}\P_k^{(q,n)}=\sum_{q=1}^M\I_{MT}^{(m,q)}\P_k^{(q,n)}=\P_k^{(m,n)},
	\end{multline}
	where $\I_{MT}$ denotes the $MT\times MT$ identity matrix and $\I_{MT}^{(m,q)}$ denotes its $(m,q)$-th $T\times T$ block (which is $\I_T$ for $q=m$ and all zeros otherwise).
	To conclude, we can now substitute this result into \eqref{eq:defKa}, obtaining
	\begin{equation}
		\mathcal{K}_{k\ell}[m,n]= \frac{1}{T}\Tr\left(\P_k^{(n,m)}\C_\ell^{(m,n)}\right),
	\end{equation}
	which can in turn be substituted into \eqref{eq:ISR_Ka} to yield the iCRLB.}

\section{Solving Extended SeDJoCo}
\label{sec_Solution} To the best of our knowledge, the extended SeDJoCo problem, and in particular its solution, have not yet been addressed in the literature (excluding our
recent conference papers \cite{cheng2015extension,cheng2016extension}, in which the extended SeDJoCo was first formulated, and a partial solution, which ignores some of the target matrices, was proposed). In
what follows we propose two comprehensive, general solution approaches, both based on extensions of existing iterative solutions of standard SeDJoCo. The first is an extension of
the Iterative Relaxation (IR) proposed by D{\'e}gerine and Za{\"\i}di \cite{degerine2004separation}. The second is based on Newton's method.

Note that both algorithms rely on some initial guess. A plausible option for obtaining an initial guess would be to initialize each $\B^{(m)}$ to the solution of the respective
standard SeDJoCo problem associated with the $m$-th set (thereby ignoring the information in the inter-set dependence). \vspace*{-0.3cm}
\subsection{Solution by Iterative Relaxations}
Recall the second formulation of the extended SeDJoCo problem \eqref{extended_sedjoco_alternative}, and notice that it can be written as
%\begin{multline}\label{extended_sedjoco_alternative_orthogonal}
\begin{equation}\label{extended_sedjoco_alternative_orthogonal}
	\begin{aligned}
		\sum_{\ell=1}^{M}{\ub_{k_1}^{(m)\tps}}{\Q_{k_2}^{(m,\ell)}}\ub_{k_2}^{(\ell)}&={\ub_{k_1}^{(m)\tps}}\sum_{\ell=1}^{M}{\Q_{k_2}^{(m,\ell)}}\ub_{k_2}^{(\ell)}\\
		&={\ub_{k_1}^{(m)\tps}}\ubeta_{k_2}^{(m)}=\delta_{k_1k_2},\\
		&\forall k_1,k_2 \in \{1,\ldots,K\}, \forall m \in \{1,\ldots,M\}.
	\end{aligned}
\end{equation}
%\end{multline}
This means that the vector $\ub_{k}^{(m)}$ is orthogonal to all vectors $\left\{\ubeta_{k'}^{(m)}\right\}_{k'\neq k}$ and its inner product with $\ubeta_{k}^{(m)}$ equals 1.
Therefore, if we assume that all vectors $\left\{\ub_{k'}^{(m)}\right\}_{k' \neq k}$ are fixed, we can update the vector ${\ub_{k}^{(m)}}$ by a somewhat-similar Gram-Schmidt
procedure of subtracting its projection on the $K-1$ subspace spanned by these vectors, followed by a ``normalization" of the remaining residual. More precisely, the updating rule
is as follows:
\begin{equation}\label{update_rule_IR}
	\begin{aligned}
		& \ub_{k}^{(m)'} \leftarrow \ub_{k}^{(m)} - {\mBeta_{(k)}}^{(m)\tps}\left({\mBeta_{(k)}}^{(m)}{\mBeta_{(k)}}^{(m)\tps}\right)^{-1}{\mBeta_{(k)}}^{(m)}\ub_{k}^{(m)} \\
		& \ub_{k}^{(m)} \leftarrow \ub_{k}^{(m)'}/\sqrt{|{\ub_{k}^{(m)'\tps}}\ubeta_{k}^{(m)}|},
	\end{aligned}
\end{equation}
where the rows of the matrix ${\mBeta_{(k)}}^{(m)} \in \Rset^{K-1 \times K}$ are the $K-1$ vectors $\left\{\ubeta_{k'}^{(m)}\right\}_{k'\neq k}$. The update rule is repeated
iteratively until convergence, running through all $m \in \{1,\ldots,M\}$ and $k \in \{1,\ldots,K\}$.

\subsection{Solution by Newton's Method}
Let us define the gradient matrix of $\mathcal{L}(\bB)$ as $\G$, such that
\begin{equation} \label{grad_mat_of_likelihood}
	\begin{split}
		\frac{\partial \mathcal{L}(\bB)}{\partial \tB} \triangleq \G &= \left[\frac{\partial \mathcal{L}(\bB)}{\partial \B^{(1)}}\;\cdots\;\frac{\partial \mathcal{L}(\bB)}{\partial \B^{(M)}}\right]\\
		&\triangleq \left[\G^{(1)}\;\cdots\G^{(M)}\right] \in \Rset^{K \times KM},
	\end{split}
\end{equation}
such that $\G^{(\ell)} \in \Rset^{K \times K}, \forall \ell \in \{1,\ldots,M\}$. Differentiating further w.r.t. $\B^{(n)}$, we get the block-matrices of the Hessian
\begin{equation} \label{Hes_mat_of_likelihood}
	\H^{(n,\ell)} \triangleq \frac{\partial \mathcal{L}(\bB)}{\partial \B^{(n)} \partial \B^{(\ell)}} \in \Rset^{K^2 \times K^2}, \; \; \; \forall n,\ell \in \{1,\ldots,M\},
\end{equation}
where
\begin{equation} \label{Hes_permuted_def}
	\H \triangleq \begin{bmatrix}
		\H^{(1,1)} & \cdots & \H^{(1,M)}\\
		\vdots & \ddots & \vdots\\
		\H^{(M,1)} & \cdots & \H^{(M,M)}\end{bmatrix} \in \Rset^{K^2M \times K^2M}.
\end{equation}
More explicitly, we have
\begin{equation} \label{Hes_mat_of_likelihood}
	\begin{aligned}
		\H^{(n,\ell)} &= \frac{\partial}{\partial \B^{(n)}}\G^{(\ell)} \\
		&= \frac{\partial}{\partial \B^{(n)}}\left({\A^{(\ell)\tps}}-\sum_{k=1}^{K}\sum_{m=1}^{M}\E_{kk}\B^{(m)}\Q_{k}^{(m,\ell)}\right) \\
		&= -\left(\I_K\otimes{\A^{(\ell)\tps}}\right)\frac{\partial{\B^{(\ell)\tps}}}{\partial\B^{(n)}}\left(\I_K\otimes{\A^{(\ell)\tps}}\right)\\
		&\;\quad-\sum_{k=1}^{K}\sum_{m=1}^{M}\left(\I_K\otimes\E_{kk}\right)\frac{\partial\B^{(m)}}{\partial\B^{(n)}}\left(\I_K\otimes\Q_{k}^{(m,\ell)}\right) \\
		&= -\left(\I_K\otimes{\A^{(\ell)\tps}}\right)\left(\delta_{n\ell}\tE^{K}\right)\left(\I_K\otimes{\A^{(\ell)\tps}}\right)\\
		&\;\quad-\sum_{k=1}^{K}\sum_{m=1}^{M}\left(\I_K\otimes\E_{kk}\right)\left(\delta_{nm}\E^{K}\right)\left(\I_K\otimes\Q_{k}^{(m,\ell)}\right) \\
		&= -\delta_{n\ell}\left(\I_K\otimes{\A^{(\ell)\tps}}\right)\tE^{K}\left(\I_K\otimes{\A^{(\ell)\tps}}\right)\\
		&\;\quad-\sum_{k=1}^{K}\left(\I_K\otimes\E_{kk}\right)\E^{K}\left(\I_K\otimes\Q_{k}^{(n,\ell)}\right),
	\end{aligned}
\end{equation}
where we have used the property $\partial\A = -\A\partial\B\A$ \cite{petersen2008matrix}, and the notations
\begin{equation} \label{der_B_of_B}
	\frac{\partial\B^{(\ell)}}{\partial\B^{(n)}} = \delta_{n\ell} \begin{bmatrix}
		\E_{11} & \cdots & \E_{1M}\\
		\vdots & \ddots & \vdots\\
		\E_{M1} & \cdots & \E_{MM}\end{bmatrix} \triangleq \delta_{n\ell}\E^{K} \in \Rset^{K^2 \times K^2},
\end{equation}
and
\begin{equation} \label{der_B_transp_of_B}
	\frac{\partial{\B^{(m)\tps}}}{\partial\B^{(n)}} = \delta_{nm} \begin{bmatrix}
		\E_{11} & \cdots & \E_{M1}\\
		\vdots & \ddots & \vdots\\
		\E_{1M} & \cdots & \E_{MM}\end{bmatrix} \triangleq \delta_{nm}{\tE}^{K} \in \Rset^{K^2 \times K^2}.
\end{equation}
If we denote the columns of the matrix $\tB$ by $\tbeta_j$ for $1\le j \le KM$ and define the indexing function $j_{\left[k,m\right]}=\left(m-1\right)\cdot K +k$ for $1\le k \le
K$ and $1\le m \le M$, we have that
\begin{equation}
	\tbeta_{j_{[k,m]}} = \tbeta_k^{(m)},
\end{equation}
where $\tbeta_k^{(m)}$ denotes the $k$-th \textit{column} of the matrix $\B^{(m)}$. Consequently, we conclude that the vectorized gradient of $\text{vec}\left(\tB\right)$ is
$\text{vec}\left(\G\right)$, for which $\G$ is defined exactly as in \eqref{grad_mat_of_likelihood}. However, the Hessian matrix of the vector $\text{vec}\left(\tB\right)$ is
given by $\tH$, which is a permuted version of $\H$. More particularly, the elements of $\tH$ are given by %as a function of the elements of $\H$ by
\begin{equation}\label{hessian_elements}
	\begin{split}
		\tH_{\left(\text{ind}[p,q,i],\text{ind}[m,n,j]\right)} &= \H_{\left(\text{ind}[p,m,i],\text{ind}[q,n,j]\right)}\\
		&= \frac{\partial \mathcal{L}(\bB)}{\partial\B_{\left(p,q\right)}^{(i)}\partial\B_{\left(m,n\right)}^{(j)}},
	\end{split}
\end{equation}
and the subscript $_{\left(p,q\right)}$ denotes the $(p,q)$-th element of a matrix. Finally, the elements of the matrices $\B^{(1)},\cdots,\B^{(M)}$ are computed iteratively
according to Newton's update rule
\begin{equation}
	\text{vec}\left(\tB\right)^{\left[n+1\right]} = \text{vec}\left(\tB\right)^{\left[n\right]} + \Delta^{\left[n\right]},
\end{equation}
where
\begin{equation}
	\Delta^{\left[n\right]} = -\tH^{-1}\cdot \text{vec}\left(\G\right) \in \Rset^{K^2M}
\end{equation}
evaluated at $\text{vec}\left(\tB\right)^{\left[n\right]}$, and the superscript $^{\left[n\right]}$ indicates the $n$-th iteration.

\section{Simulation Results}
\label{sec_Simulation} We present simulation results of $3$ different experiments. First, we demonstrate the proposed algorithms' convergence behavior for a generic extended
SeDJoCo problem with random target-matrices. We then proceed to demonstrate the performance of the extended SeDJoCo solution as the ML estimate in the context of Gaussian JBSS
(IVA) in terms of common separation measures.
\subsection{Convergence Behavior}
In our first experiment we assess the convergence behavior of the two solutions proposed in Section \ref{sec_Solution} for a generic extended SeDJoCo problem. In this experiment
the target-matrices are generated as follows: First, we generate a set of $K$ PD matrices by
\begin{equation} \label{def_target_matrices_simul}
	\Omeg_k =\U_k\U_k^{\tps} \in \Rset^{KM \times KM}, \;\;\; \forall k \in \{1,\ldots,K\},
\end{equation}
where the elements of $\{\U_k\}_{k=1}^{K}$ are drawn independently from the standard Gaussian distribution. Then we take the $K \times K$ $KM^2$ blocks of all matrices
$\{\Omeg_k\}_{k=1}^{K}$, as defined in \eqref{augmented_taget}, to be the set of target-matrices. This way a solution is guaranteed to exist (according to \addra{Theorem} \ref{theorem1}).
We initialize the solution to be the set of identity matrices, i.e., $\{\B^{(m)}=\I_K\}_{m=1}^{M}$. Our measure of convergence is (cf. \eqref{result_e_sedjoco}) the logarithm of
\begin{equation} \label{Convergence_evaluation_metrix}
	\mathcal{E} \triangleq \left\|\sum_{m=1}^{M}\sum_{k=1}^K\sum_{\ell=1}^M\B^{(m)}\Q_k^{(m,\ell)}{\B^{(\ell)\tps}}\E_{kk}-\I_K \right\|_{\text{F}},
\end{equation}
which is the Frobenius norm of the residual-error matrix.

\begin{figure}[]
	\centering
	\begin{subfigure}[b]{0.23\textwidth}
		\includegraphics[width=\textwidth]{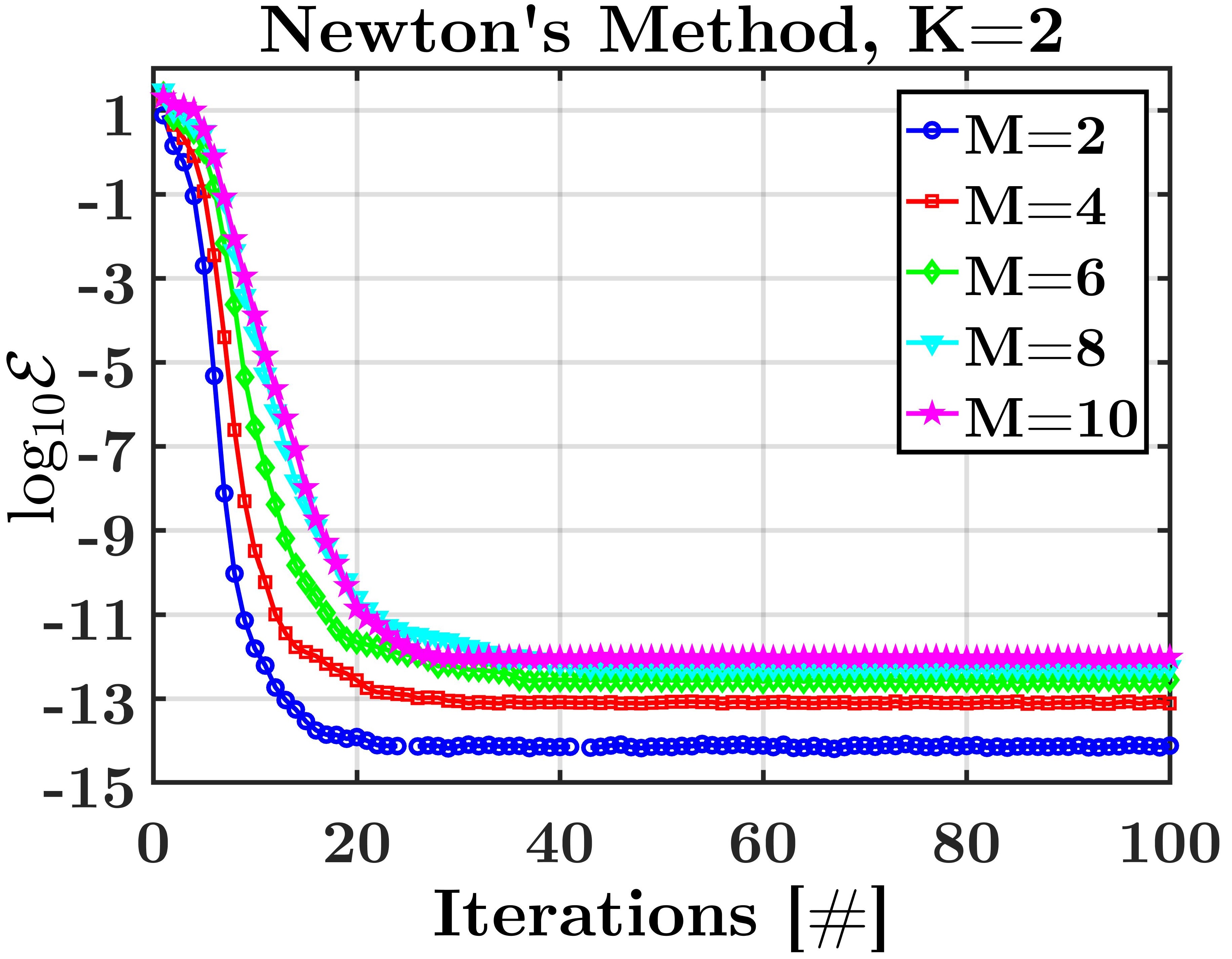}
		\caption{}
		\label{fig:Newton's_method_K_2}
	\end{subfigure}
	~
	\begin{subfigure}[b]{0.23\textwidth}
		\includegraphics[width=\textwidth]{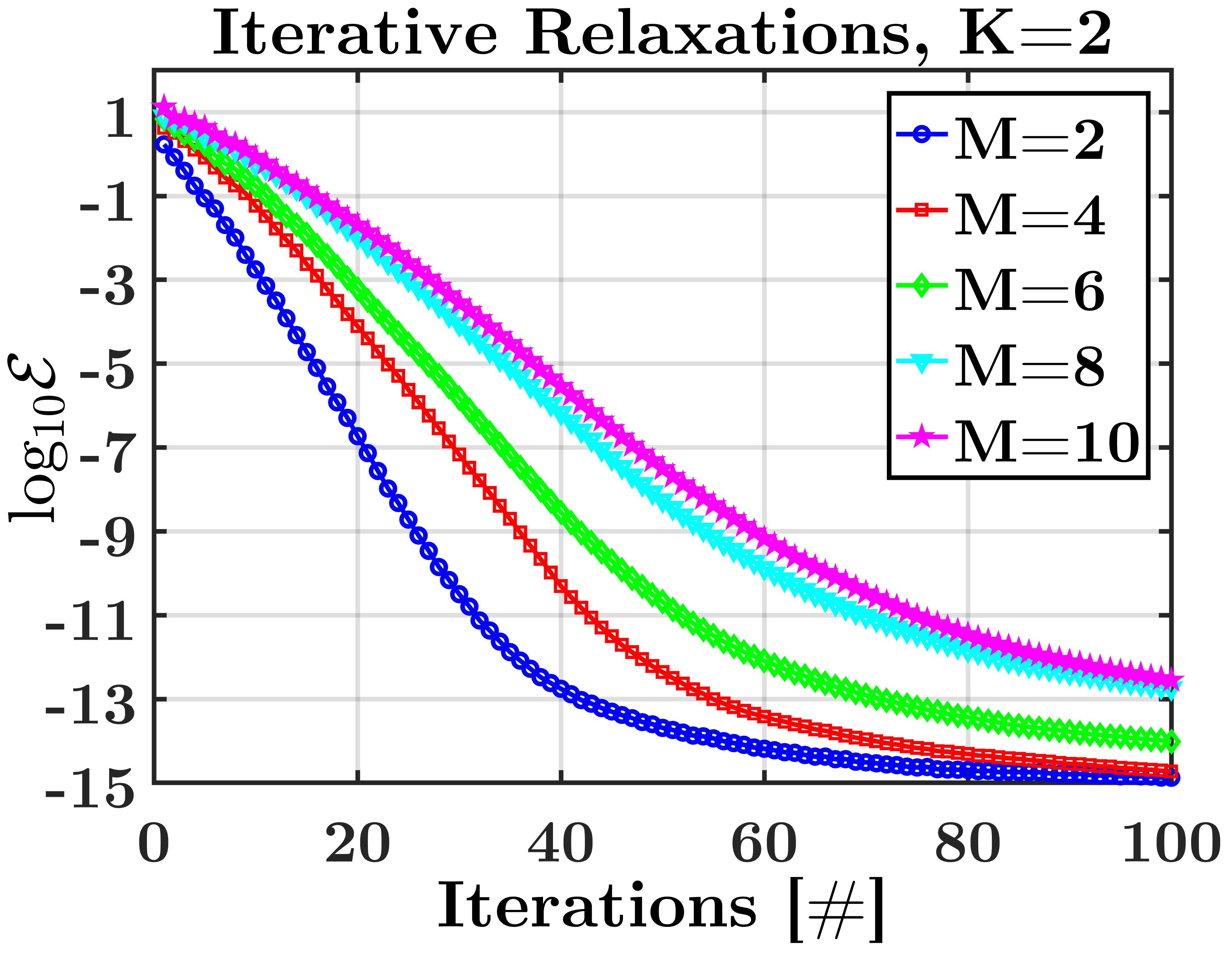}
		\caption{}
		\label{fig:IR_K_2}
	\end{subfigure}
	\begin{subfigure}[b]{0.23\textwidth}
		\includegraphics[width=\textwidth]{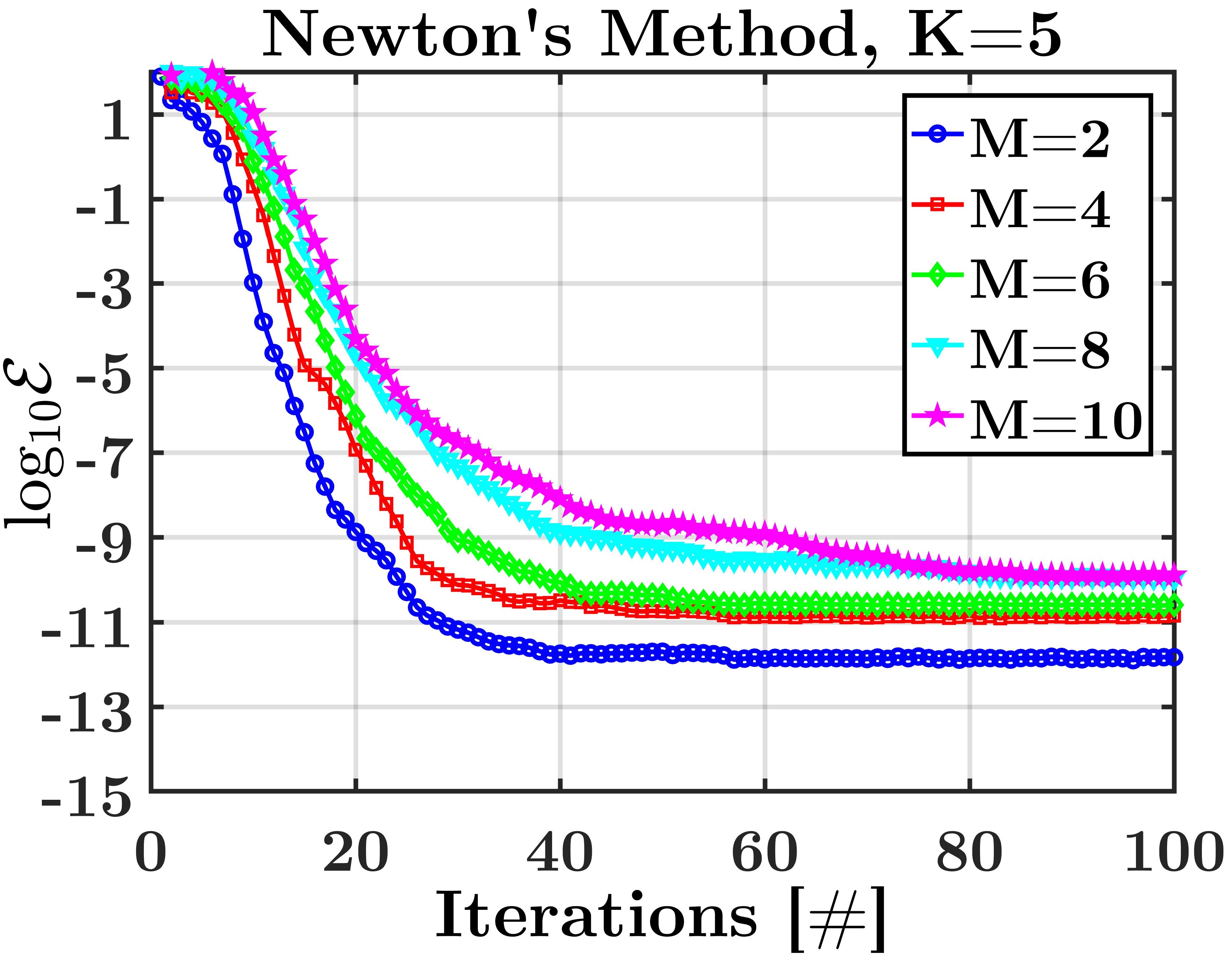}
		\caption{}
		\label{fig:Newton's_method_K_5}
	\end{subfigure}
	~
	\begin{subfigure}[b]{0.23\textwidth}
		\includegraphics[width=\textwidth]{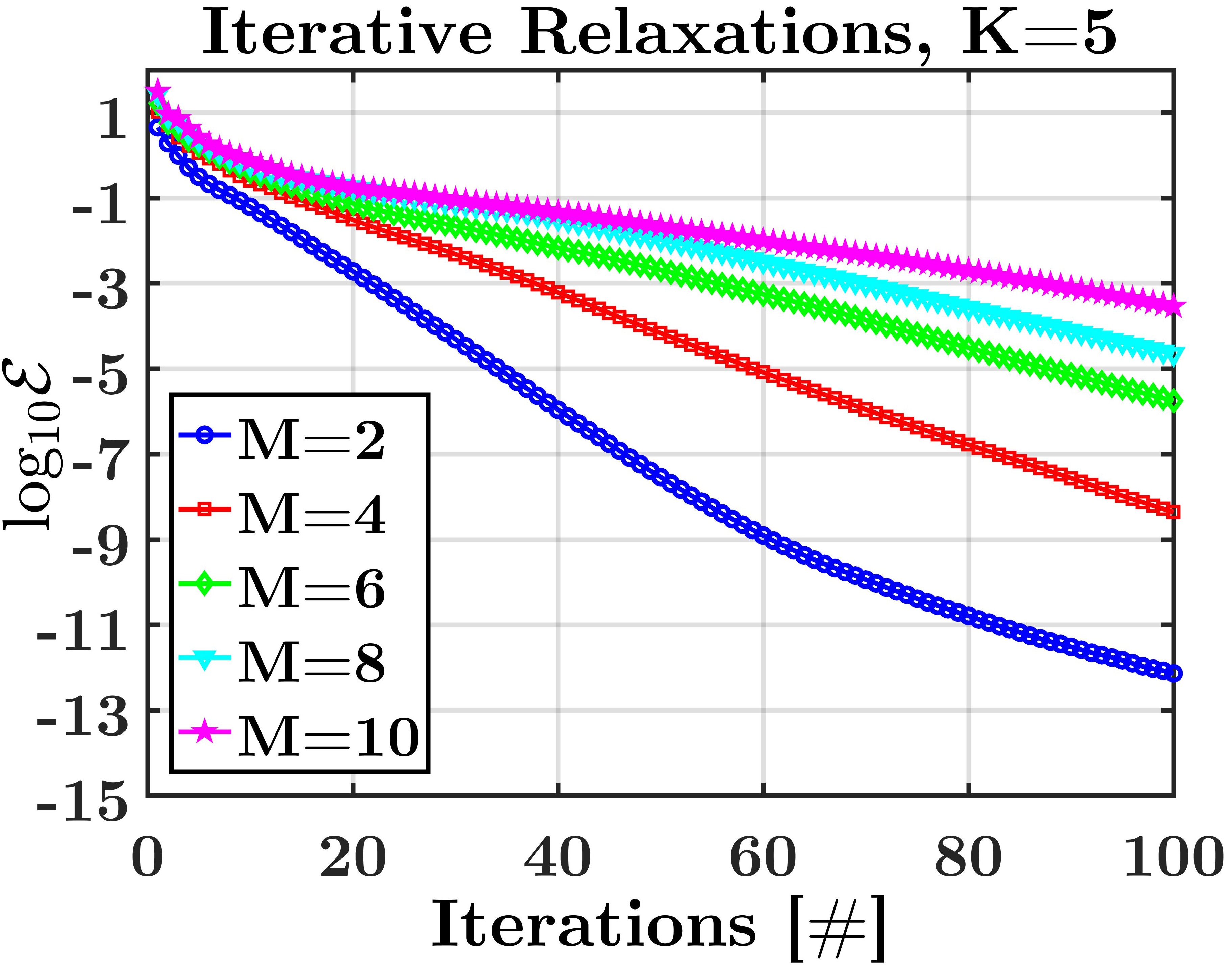}
		\caption{}
		\label{fig:IR_K_5}
	\end{subfigure}
	\caption{Convergence patterns of the proposed algorithm - logarithm of the residual root-mean-squares error vs. the number of iterations for different values of $M$ with a fixed value of $K$. (a) Newton's method, $K=2$ (b) IR, $K=2$ (c) Newton's method, $K=5$ (d) IR, $K=5$.}
	\label{fig:convergence_behavior}
\end{figure}

Fig. \ref{fig:convergence_behavior} shows the convergence patterns of the proposed algorithms for different values of $M$ with $K$ fixed.  The results in Figs.
\ref{fig:Newton's_method_K_2}-\ref{fig:IR_K_2} and \ref{fig:Newton's_method_K_5}-\ref{fig:IR_K_5} were obtained by averaging $100$ independent identical trials for $K=2$ and
$K=5$, respectively. It is evident that both algorithms converge to the solution; the Newton's algorithm converges much faster, typically within tens of iteration, whereas the IR
algorithm converges significantly more slowly as $K$ or $M$ or both increase. For this example, with $K=5$, $M=4$, and when convergence is defined at $\mathcal{E}=10^{-10}$, the
average run time for convergence was $0.5457$ seconds with Newton's method, and $1.7762$ seconds with the IR algorithm, so that the solution by Newton's method was computationally
more efficient in this case. Note, however, that the fast (quadratic) convergence of Newton's algorithm comes at the cost of a computational complexity increase per iteration; if we define a
full update iteration as an update of all the elements of $\tB$, the iterative relaxations algorithm requires $\mathcal{O}(MK^4)$ operations for a full update iteration (due to
$MK$ vector updates of $\mathcal{O}(K^3)$), whereas Newton's algorithm requires $\mathcal{O}(M^3K^6)$ (due to inversion of the Hessian). In addition, and as expected, it can be
seen that as $K$ or $M$ (or both) increase, the number of iterations increases as well
(for both algorithms). %Moreover, due to machine accuracy limitations, convergence is limited as well in the sense of minimal achievable residual RMS for each set of the parameters
%$\{M,K\}$ (from $\sim10^{-15}$ to $\sim10^{-10}$).
\subsection{JBSS (IVA) of Gaussian Sources}
In this part we focus on the application of the extended SeDJoCo as the ML solution for \addrb{semi-blind} Gaussian JBSS \addrb{(where the sources' covariance matrices are assumed to be known in advance)}. Results are based on averaging 1000 independent trials, where the mixing matrices elements are redrawn from a standard Gaussian distribution in each trial. We solve the extended SeDJoCo problem with the solution by Newton's method due to its (empirical) faster convergence and better resilience to initialization in convergence to the ML solution.
\begin{figure}[]
	\centering
	\begin{subfigure}[b]{0.4\textwidth}
		\includegraphics[width=\textwidth]{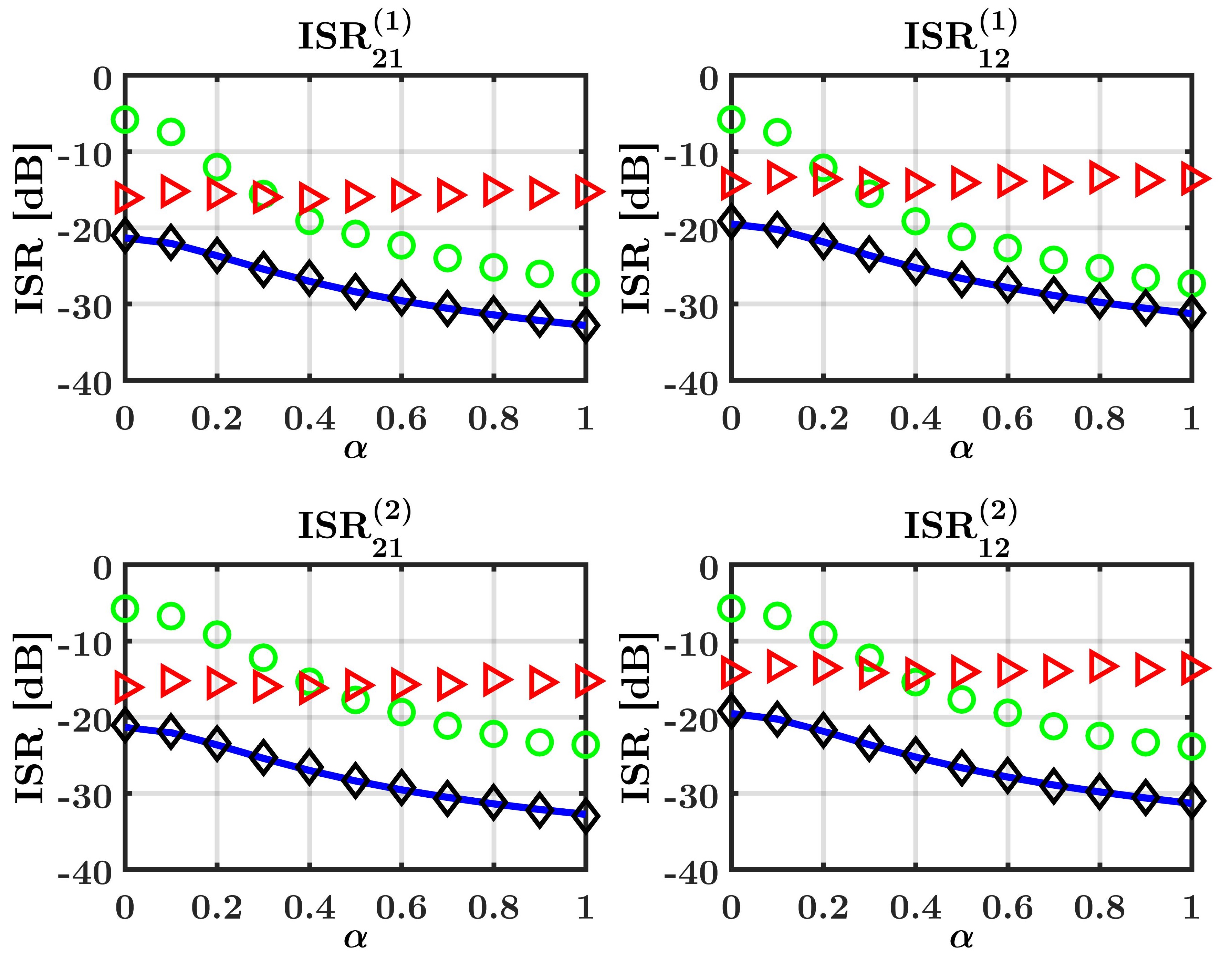}
		\caption{}
		\label{fig:ISR_elements_vs_alpha_non_stationary}
	\end{subfigure}
	\quad
	\begin{subfigure}[b]{0.4\textwidth}
		\includegraphics[width=\textwidth]{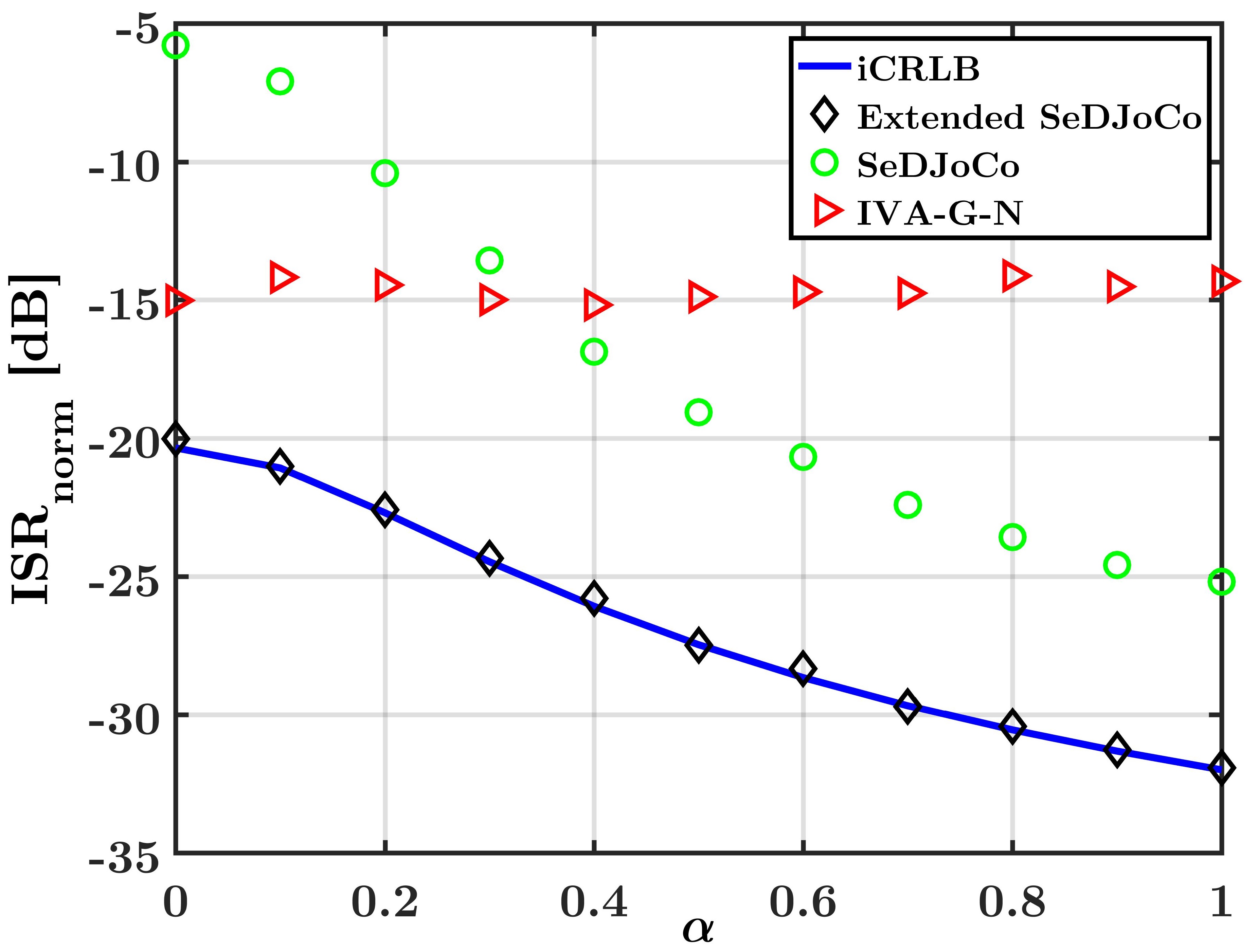}
		\caption{}
		\label{fig:ISR_norm_vs_alpha_non_stationary}
	\end{subfigure}
	\caption{(a) Empirical ISR vs. $\alpha$ (b) Empirical $\text{ISR}_{\text{norm}}$ vs. $\alpha$. It can be seen that the extended SeDJoCo solution is superior to both other solution. Furthermore, it attains the iCRLB. The legend in (b) is valid for (a) as well.}
	\label{fig:ISR_vs_alpha_non_stationary}
\end{figure}
\begin{figure}[]
	\centering
	\begin{subfigure}[b]{0.43\textwidth}
		\includegraphics[width=\textwidth]{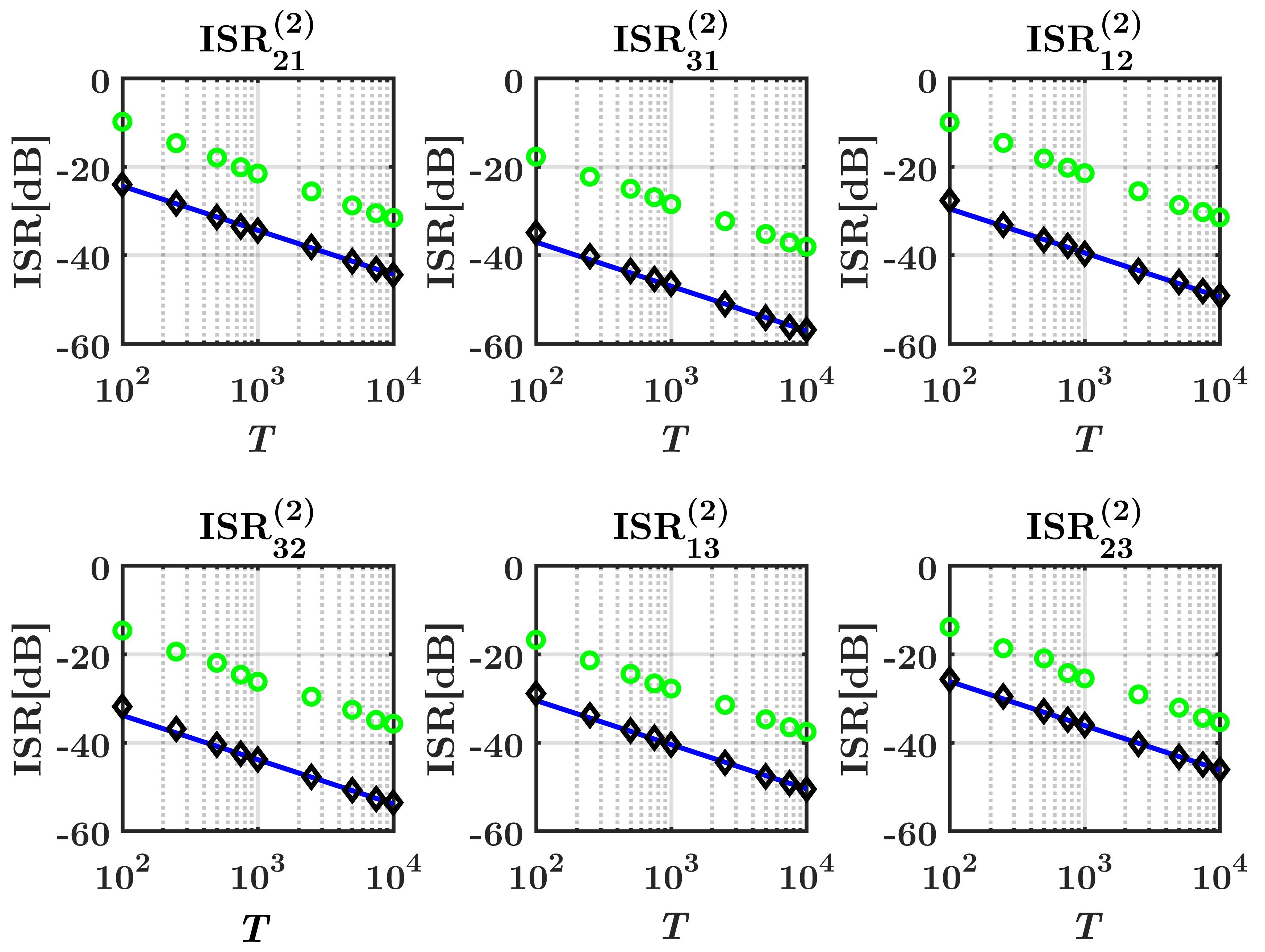}
		\caption{}
		\label{fig:ISR_vs_T_stationary}
	\end{subfigure}
	~
	\begin{subfigure}[b]{0.48\textwidth}
		\includegraphics[width=\textwidth]{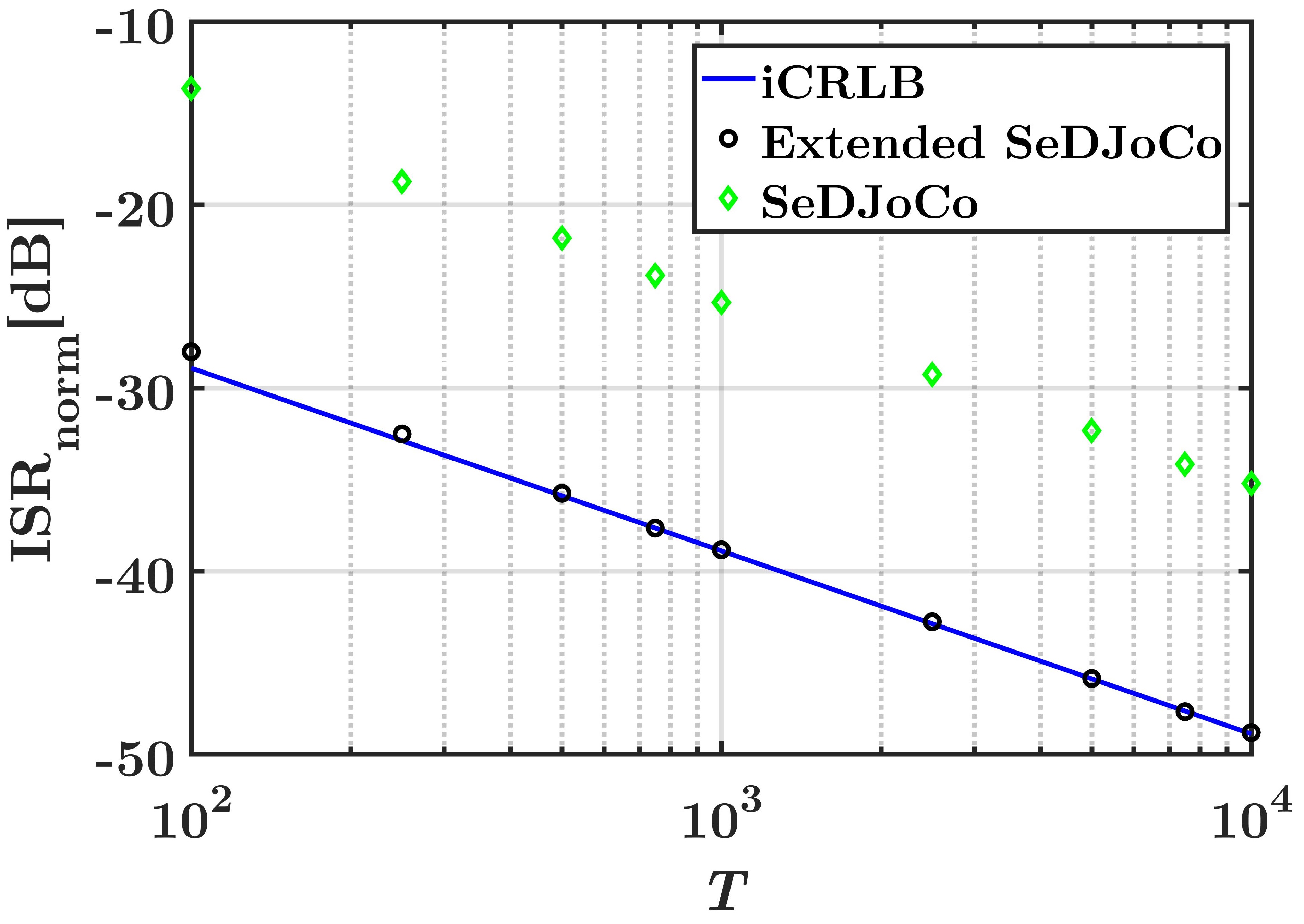}
		\caption{}
		\label{fig:ISR_norm_vs_T_stationary}
	\end{subfigure}
	\caption{(a) Empirical ISR vs. $T$ for the second dataset (b) Empirical $\text{ISR}_{\text{norm}}$ vs. $T$. The empirical results validate both the ML solution's optimality and our analytical expression of the iCRLB.}
	\label{fig:ISR_norm_vs_alpha_non_stationary}
	\vspace{-0.3cm}
\end{figure}
\begin{table}[]
	\centering
	\begin{tabular}{ | c | c | c | c | c |}
		\hline
		Parameter $\setminus$ $(k,m)$ & $(1,1)$ & $(1,2)$ & $(2,1)$ & $(2,2)$ \\ \hline
		$\phi_0^{(k,m)}$ & $\pi$ & $\frac{5\pi}{3}$ & $\frac{\pi}{3}$ & $\pi$ \\ \hline
		$N_{k,m}$ & $50$ & $350$ & $200$ & $500$ \\ \hline
		$\sigma_k^{(m)}$ & $2$ & $3\frac{1}{3}$ & $2\frac{2}{3}$ & $4$ \\ \hline
	\end{tabular}
	\caption{Fixed parameter values for experiment 2.}
	\label{table:FixedValuesSimul1}
\end{table}
First, we consider the simple case of two datasets ($M=2$), each with two Gaussian sources ($K=2$). The $k$-th source of the $m$-th dataset, $s_k^{(m)}[n]$, is generated as
\begin{multline} \label{source_definition_simul1}
	s_k^{(m)}[n] = \left(\sigma_k^{(m)}+\alpha \cdot \cos\left(\phi_{k,m}[n]\right)\right)w_k[n] + v_{k}^{(m)}[n],\\
	\forall k,m \in \{1,2\},
\end{multline}
where
\begin{equation} \label{source_definition_simul1_phi}
	\phi_{k,m}[n] = \frac{2\pi}{N_{k,m}}\cdot n + \phi_0^{(k,m)},
\end{equation}
$\alpha$ is a real parameter, $\left\{\phi_0^{(k,m)},N_{k,m},\sigma_k^{(m)}\right\}_{k,m=1}^2$ are fixed (see Table \ref{table:FixedValuesSimul1}), and
$\left\{w_k[n],v_k^{(m)}[n]\right\}_{k,m=1}^2$ are all mutually independent white standard Gaussian processes. Clearly, the $k$-th sources are correlated between sets, and when
$\alpha$ is non-zero the sources are non-stationary. In the limit case where $\alpha=0$, each source is a white Gaussian process, therefore separation cannot be attained by ICA
alone (i.e., when ignoring the inter-datasets correlations). We compare the performance of three different solutions: the extended SeDJoCo solution (by the proposed algorithm)
which yields the ML estimates w.r.t. the IVA problem altogether, the SeDJoCo solution of each dataset separately, which yields the ML estimates w.r.t. the two ICA problems
separately, and Anderson {\it et al.}'s \cite{anderson2012joint} Newton updates for Gaussian IVA (IVA-G-N), which is intended for separation of independent identically distributed
(i.i.d.) Gaussian sources. In this experiment we assume that the covariance matrices of the sources are known, i.e., a semi-blind scenario, and we demonstrate how the ML solution
can exploit this information in contrast to other solutions, e.g., the IVA-G-N, which cannot. The observation length was set to $T=1000$. Fig.
\ref{fig:ISR_vs_alpha_non_stationary} shows all the empirical ISR elements, as well as the empirical total normalized ISR,
\begin{equation} \label{JISR_defenition}
	\text{ISR}_{\text{norm}} \triangleq \frac{1}{MK(K-1)}\sum_{m=1}^{M}{\sum_{\substack{i,j=1 \\ i \neq j}}^{K}{\text{ISR}_{ij}^{(m)}}},
\end{equation}
vs. $\alpha \in [0,1]$, comparing also to the iCRLB derived in Section \ref{sec_iCRLB}. A good fit between the theoretical prediction and the empirical results is evident; when
$\alpha=0$ the SeDJoCo solutions (semi-blind ICA ML approach) collapse whereas the IVA approaches give good separation in terms of the ISR. The IVA-G-N performs properly since the
sources are indeed i.i.d., and the extended SeDJoCo solution (semi-blind IVA ML approach) achieves the iCRLB. As $\alpha$ increases, the sources become ``more" non-stationary.
Accordingly, IVA-G-N's performance becomes slightly worse (due to the model mis-match) while SeDJoCo keeps improving (to the point where the cross-correlations between datasets
are significantly less informative - compared to the temporal correlation of each source within each dataset). Extended SeDJoCo keeps attaining the iCRLB for all $\alpha$. We
stress that in this semi-blind scenario the ML solutions have an ``unfair" advantage over IVA-G-N, which, unlike the ML solutions, cannot exploit the prior knowledge of the
sources' covariance matrices. Nevertheless, it is our purpose in this work to show how available prior information such as this can be exploited in an optimal manner.

Our last experiment deals with zero-lag-uncorrelated stationary sources. We consider the case where $M=K=3$. The $k$-th source of the $m$-th dataset is generated as
\begin{equation} \label{source_definition_simul2}
	s_k^{(m)}[n] = v_k^{(m)}[n-L\cdot(m-1)], \;\; \forall k,m \in \{1,2,3\},
\end{equation}
where
\begin{equation} \label{colored_noise_simul2}
	v_k^{(m)}[n] = \sum_{\ell=1}^{M}{w_k^{(\ell)}[n]\ast h_k^{(m,\ell)}[n] },
\end{equation}
$\left\{w_k^{(m)}[n]\right\}_{k,m=1}^3$ are all mutually independent white, standard Gaussian noise processes, $\left\{h_k^{(m_1,m_2)}[n]\right\}_{k,m_1,m_2=1}^3$ are Finite
Impulse Response (FIR) filters of length $L$ for which
\begin{equation} \label{FIR_energy}
	\sum_{n=0}^{L-1}\left|h_k^{(m_1,m_2)}[n]\right|^2 = \begin{cases}
		1, & m_1=m_2 \\
		\eta, & m_1\neq m_2 \\
	\end{cases}, \forall k,m_1,m_2 \in \{1,2,3\},
\end{equation}
so that $\eta$ is a parameter which controls the ``relative energy" contained in the cross-spectra between corresponding sources from different datasets, and $\ast$ denotes the
convolution operator. As can be seen from \eqref{colored_noise_simul2}, $h_k^{(m_1,m_2)}[n]$ is the FIR filter applied to the $m_2$-th white driving-noise in order to generate a
component of the $k$-th source in the $m_1$-th dataset. Clearly, the cross-spectrum between any pair $\left\{s_k^{(m_1)}[n],s_k^{(m_2)}[n]:m_1 \neq m_2\right\}$ is non-zero when
$\eta>0$. However, note that although all such pairs are correlated, their zero-lag correlations are obviously zero (due to the $L\cdot(m-1)$ delays).

The FIR filters were drawn from a standard Gaussian distribution with $L=5$ and $\eta=1$. We compare the performance of the extended SeDJoCo solution and the SeDJoCo solutions
only, since the IVA-G-N algorithm requires instantaneous (zero-lag) correlation between sets and therefore performs very poorly in this scenario\footnote{This was validated in
	simulations.} (because, in addition to being zero-lag uncorrelated between sets, all sources have the same variance within sets, so they cannot even be ICA-separated using
IVA-G-N, due to its inherent temporal i.i.d. model assumption). As can be seen from Fig. \ref{fig:ISR_vs_T_stationary}, which shows the ISR elements of the second dataset vs. the
observation length $T$, the SeDJoCo solution yields quite good separation results using only the spectral diversity. However, the extended SeDJoCo solution ``enjoys" not only the
spectral diversity within each dataset, but also the cross-spectral diversity between the corresponding sources from different datasets. In this example, the average gain in ISR
is about $15$[dB] compared with the SeDJoCo solution. Similar results were obtained for the first and third datasets as well. This is reflected in Fig.
\ref{fig:ISR_norm_vs_T_stationary} which shows the total normalized ISR.
\vspace{-0.6cm}
\section{Conclusion}
\label{sec_Conclusion} We presented the ``extended SeDJoCo" problem, which is instrumental in finding the ML estimate of the separation matrices in the context of semi-blind IVA
in a Gaussian model. This problem is also closely related to CBF in a multicast setting, and possibly to other applications. Thus, after deriving different formulations of this general problem, we outlined some of its generic properties, such as a condition for the existence of a solution and multiplicity of the solutions. We also derived two iterative solution
algorithms, offering a trade-off between the required number of iterations and the computational complexity per iteration.

In the particular context of semi-blind IVA, we also derived the iCRLB on the elements of the ISR matrices for the case of Gaussian sources with arbitrary (but known) temporal
auto-covariance matrices and cross-covariance matrices (between sources in different sets). We then demonstrated how this broader paradigm enables (via a solution of the extended
SeDJoCo equations) the asymptotically optimal ML separation (attaining the iCRLB) of general stationary or non-stationary sources. This ability provides a significant advantage
over existing IVA methods, which so far only considered the model of temporally-i.i.d. source-vector components, and moreover, could not exploit prior knowledge in a semi-blind
scenario. \delra{We note further, that when the covariance matrices are not known {\it a-priori}, but can be succinctly parameterized (e.g., in the case of stationary parametric auto-regressive / moving average sources), an iterative separation strategy may be used, in which these matrices are first estimated from the data following initial separation, and then a semi-blind framework is applied using the estimated matrices, with successive refinements – thereby approaching asymptotic optimality.}
\vspace{-0.6cm}
\section{Acknowledgment}
\label{sec_acknowledgment} The authors gratefully acknowledge the financial support by the German-Israeli Foundation (GIF), grant number I-1282-406.10/2014. The first author also wishes to thank the Yitzhak and Chaya Weinstein Research Institute for Signal Processing for a fellowship. In addition, the authors wish to thank T\"{u}lay Adal{\i} for providing a helpful MATLAB code of the IVA-G-N algorithm.
\vspace{-0.5cm}
\appendices
\section{Differentiation of the Likelihood Function} \label{appendix_a}
Using the following properties:
\begin{equation*}
\begin{aligned}
& \text{(a)} \ \frac{\partial \log |\text{det}\X|}{\partial \X} = \left(\X^{-1}\right)^{\tps}, \text{(b)} \ \frac{\partial \ua^{\tps}\X\ub}{\partial \X} = \ua\ub^{\tps},\\
& \text{(c)} \ \frac{\partial \ua^{\tps}\X^{\tps}\ub}{\partial \X} = \ub\ua^{\tps}, \text{(d)} \ \Q_{k}^{(m_1,m_2)}={\Q_{k}^{(m_2,m_1)\tps}},\\
& \text{(e)} \ \frac{\partial \ub^{\tps}\X^{\tps}\D\X\uc}{\partial \X} = \D^{\tps}\X\ub\uc^{\tps}+\D\X\uc\ub^{\tps}, \; \text{(f)} \ \E_{ij} \triangleq \ue_i\ue_j^{\tps},\\
\end{aligned}
\end{equation*}
we have that
\begin{equation}
\label{d_log_likelihood}
\begin{aligned}
&\frac{\partial \mathcal{L}(\bB)}{\partial \B^{(m)}} = \frac{\partial}{\partial \B^{(m)}}\left(\sum_{\ell=1}^{M}\log|\text{det}\B^{(\ell)}|\right.\\
&\left.-\frac{1}{2}\sum_{k=1}^{K}\sum_{\substack{m_1=1\\m_2=1}}^{M}\ue_k^{\tps}\B^{(m_1)}\Q_{k}^{(m_1,m_2)}{\B^{(m_2)\tps}}\ue_k+\gamma\right) \\
&=\sum_{\ell=1}^{M}\frac{\partial \log|\text{det}\B^{(\ell)}|}{\partial \B^{(m)}} \\
&-\frac{1}{2}\sum_{k=1}^{K}\sum_{\substack{m_1=1\\m_2=1}}^{M}\frac{\partial\ue_k^{\tps}\B^{(m_1)}\Q_{k}^{(m_1,m_2)}{\B^{(m_2)\tps}}\ue_k}{\partial \B^{(m)}} \\
&\underset{\text{(a)}}{=} {\A^{(m)\tps}}-\frac{1}{2}\sum_{k=1}^{K}\sum_{m_1=1}^{M}\sum_{\substack{ m_2=1 \\ m_2 \ne m_1}}^{M}\frac{\partial\ue_k^{\tps}\B^{(m_1)}\Q_{k}^{(m_1,m_2)}{\B^{(m_2)\tps}}\ue_k}{\partial \B^{(m)}} \\
&-\frac{1}{2}\sum_{k=1}^{K}\sum_{m_1=1}^{M}\frac{\partial}{\partial \B^{(m)}}\left(\ue_k^{\tps}\B^{(m_1)}\Q_{k}^{(m_1,m_1)}{\B^{(m_1)\tps}}\ue_k\right) \\
%&\underset{\text{(b),(c),(d),(e)}}{=} {\A^{(m)\tps}}-\frac{1}{2}\sum_{k=1}^{K}\left(2\sum_{\substack{ m_1=1 \\ m_1 \ne m}}^{M}\ue_k\ue_k^{\tps}\B^{(m_1)}\Q_k^{(m_1,m)}\right.\\
%&\left.+2\ue_k\ue_k^{\tps}\B^{(m)}\Q_k^{(m,m)}\vphantom{2\sum_{\substack{ m_1=1 \\ m_1 \ne m}}^{M}}\right) \\
&\underset{\text{(b),(c),(d),(e)}}{=} {\A^{(m)\tps}}-\frac{1}{2}\sum_{k=1}^{K}2\sum_{m_1=1}^{M}\ue_k\ue_k^{\tps}\B^{(m_1)}\Q_k^{(m_1,m)}\\
&\underset{\text{(f)}}{=} {\A^{(m)\tps}}-\sum_{k=1}^{K}\sum_{m_1=1}^{M}\E_{kk}\B^{(m_1)}\Q_k^{(m_1,m)}.
\end{aligned}
\end{equation}
\vspace{-0.8cm}
\bibliography{Bibfile}
\bibliographystyle{unsrt}

\end{document}